\theoremstyle{definition}
\theoremstyle{remark}
\newcounter{prop}[defn]
\renewcommand*{\theprop}{\thesubsection.\arabic{prop}}
\newenvironment{tvr}{%
  \refstepcounter{prop}%
	\vspace{5pt}
  \paragraph{\textbf{Proposition~\theprop}}%
}
\newenvironment{thm}{%
  \refstepcounter{prop}%
		\vspace{5pt}
  \paragraph{\textbf{Theorem~\theprop}}%
}
\newenvironment{example}{%
  \refstepcounter{prop}%
			\vspace{5pt}
  \paragraph{\textbf{\textit{Example}}\textbf{~\theprop}}%
}
\theoremstyle{definition}
\newcommand{\hartcoefficient}{l}
\newcommand{\cas}{\mathrm{cas}}
\newcommand{\Hart}{\zeta}
\newcommand{\Rere}{\mathrm{Re}\,}
\newcommand{\Imim}{\mathrm{Im}\,}
\newcommand{\e}{\mathrm{e}}
\theoremstyle{remark}
\newcommand{\one}{\mathfrak{1}}
\newcommand{\x}{a}
\newcommand{\si}{\sigma}
\newcommand{\aff}{\mathrm{aff}}
\newcommand{\dual}{^\vee}
\newcommand{\wdual}[1]{\omega^\vee_{#1}}
\newcommand{\wvector}[1]{\omega_{#1}}
\newcommand{\R}{\mathbb{R}}
\newcommand{\scalar}[2]{ \left\langle #1,#2 \right\rangle}
\newcommand{\Z}{\mathbb{Z}}
\newcommand{\N}{\mathbb{N}}
\newcommand{\Stab}{\text{Stab}}
\newcommand{\set}[2]{\left\{#1 \,\mid \, #2 \right\}}
\newcommand{\la}{\lambda}
\newcommand{\om}{\omega}
\newcommand{\setb}[2]{\left\{#1 \, \mid\, #2 \right\}}
\newcommand{\ep}{\epsilon}
\newcommand{\im}{\mathrm{i}}
\newcommand{\map}{\rightarrow}
\newcommand{\Int}{\operatorname{I}}
\newcommand{\Inth}{\operatorname{Ih}}
\renewcommand{\int}{\operatorname{int}}
\newcommand{\q}{\quad}
\renewcommand{\epsilon}{\varepsilon}
\newcommand{\al}{\alpha}
\renewcommand{\rho}{\varrho}
\newcommand{\Com}{{\mathbb C}}
\newcommand{\setm}[2]{\left\{#1 \, \big|\, #2 \right\}}
\newcommand{\abs}[1]{\left\vert#1\right\vert}
\newcommand{\wt}{\widetilde}
\newcommand{\wh}{\widehat}
\newcommand{\sca}[2]{\langle #1,\, #2\rangle}
\newcommand{\comb}[2]{\begin{pmatrix}
     #1\\
     #2
  \end{pmatrix}}
\begin{document}

\title[On $E-$Discretization of Tori of Lie Groups: II]
{On $E-$Discretization of Tori of Compact Simple Lie Groups: II}

\author[J. Hrivn\'{a}k]{Ji\v{r}\'{\i} Hrivn\'{a}k$^{1}$}
\author[M. Jur\'{a}nek]{Michal Jur\'{a}nek$^{1}$}

\begin{abstract}\small
Ten types of discrete Fourier transforms of Weyl orbit functions are developed. Generalizing one-dimensional cosine, sine and exponential, each type of the Weyl orbit function represents an exponential symmetrized with respect to a subgroup of the Weyl group. Fundamental domains of  even affine and dual even affine Weyl groups, governing the argument and label symmetries of the even orbit functions, are determined. The discrete orthogonality relations are formulated on finite sets of points from the refinements of the dual weight lattices. Explicit counting formulas for the number of points of the discrete transforms are deduced.  Real-valued Hartley orbit functions are introduced and all ten types of the corresponding discrete Hartley transforms are detailed.       
\end{abstract}

\maketitle
{\vspace{-10pt}
\noindent
$^1$ Department of Physics,
Faculty of Nuclear Sciences and Physical Engineering, Czech
Technical University in Prague, B\v{r}ehov\'a~7, 115 19 Prague 1, Czech
Republic
}
\vspace{10pt}

\noindent
\textit{E-mail:} jiri.hrivnak@fjfi.cvut.cz, michal.juranek@fjfi.cvut.cz

\medskip
\noindent
\textit{Keywords:} Weyl-orbit functions, discrete orthogonality, discrete Fourier transform, Hartley transform
\date{\today}
\section{Introduction}

The purpose of this paper is to complete and extend the discrete Fourier analysis of Weyl-orbit functions from \cite{discliegrI,discliegrII,discliegrE}. The discrete Fourier calculus of all ten types of orbit functions with symmetries inherited from all four types of even Weyl groups is unified in full generality. The real-valued versions of the functions and transforms are also developed by modifying the exponential kernels of orbit functions to their Hartley alternatives \cite{Brac}.      

The ten types of complex orbit functions of Weyl groups and their even subgroups stem from the study of standard symmetric and antisymmetric orbit sums \cite{Bour} as special functions \cite{KP1,KP2}. Another set of special functions, the $E-$functions, is generated by symmetrizing exponentials over even subgroups of the Weyl groups. These normal even subgroups consist of Weyl group elements with unity determinant only and their corresponding invariant $E-$functions retain Fourier-like analysis properties. The discrete Fourier calculus of  
(anti)symmetric orbit functions and $E-$functions is developed in \cite{discliegrI,discliegrE}. For root systems with two lengths of roots, additional two sets of short and long orbit functions are generated utilizing the concept of short and long sign homomorphisms \cite{SsSlcub}. The discrete Fourier analysis of these short and long orbit functions is detailed in \cite{discliegrII}. 

The short and long sign homomorphisms induce two types of associated even Weyl groups which in turn produce five more types of $E-$functions. So far, their specific boundary behaviour, continuous and discrete Fourier transforms are studied only for two-variable cases in \cite{Ef2d}. These five types of functions are obtained by combinations of the given type of even Weyl group and one additional sign homomorphism. Moreover, it appears that all ten types of orbit functions, including the original (anti)symmetric orbit sums, are  similarly determined by the choice of two sign homomorphisms. The first sign homomorphism of the given orbit function rules its (anti)symmetry properties and the second indicates its underlying symmetry group. Such consolidated generalizing approach not only allows the deduction of the five remaining types of discrete Fourier transforms of even orbit functions in the present paper, but also incorporates all previously studied cases from \cite{discliegrI,discliegrII,discliegrE,Ef2d}. Another direction in generalization of the Fourier calculus of Weyl-orbit functions is motivated by the growing ubiquity of a real-valued version of the Fourier transform, the Hartley transform \cite{Poul}.

Since introduction of the discrete version of the Hartley transform in \cite{Brac}, both continuous and discrete Hartley transforms form fully equivalent real-valued variants of the standard Fourier transforms \cite{Poul}. As alternatives to complex Fourier transforms, these transforms together with their 2D and 3D versions found applications in many fields including signal processing \cite{Paras, Pusch}, pattern recognition \cite{Bhar}, geophysics \cite{Kuhl}, measurement \cite{Sun} and optics \cite{Liu}. In the context of Weyl-orbit functions and their corresponding transforms, the Hartley transforms have not yet been studied. Replacing exponential kernel as in original 1D Hartley transform yields novel families of real-valued special functions of Weyl groups, which inherit (anti)symmetry properties as well as discrete orthogonality relations from the original Weyl-orbit functions. The resulting generalized Hartley transforms together with the original ten types of Weyl-orbit functions offer, especially in 2D and 3D, richer options and application potential due to greater variability of domain shapes and boundary behaviour. 

The original (anti)symmetric Weyl-orbit functions constitute an integral part of conformal field theories with Lie group symmetries. Applications of other types of Weyl-orbit functions and their associated discrete transforms in this field are investigated in \cite{HW1,HW2}. One of the crucial properties of all ten types of Weyl-orbit functions is existence of product-to-sum decomposition formulas. These formulas allow the generalization of the classical mechanics model of eigenvibrations of a string with beads to their corresponding multidimensional versions generated by Weyl group symmetries and their inherent invariant lattices. Such mechanical models are intensively studied in connection with the graphene material \cite{KaHa,CsTi} and form stepping stones for their quantum field versions \cite{DrSa}. Dispersion relations of these models are commonly derived assuming solutions in exponential form while imposing periodic boundary conditions. Generalizing cosine and sine standing waves solutions of 1D beaded string with von Neumann and Dirichlet conditions, respectively, the ten types of Weyl-orbit functions potentially represent solutions with greatly expanded collection of boundary conditions.    

The paper is organized as follows. In Section 2, properties of  crystallographic root systems and corresponding affine Weyl groups are reviewed. In Section 3, necessary aspects of four types of affine even Weyl groups and their fundamental domains are derived.  The ten types of orbit functions together with their Hartley variants are introduced in Section 4. Discrete orthogonality of orbit functions  and the resulting discrete transforms of both complex and Hartley types are deduced in Section 5. The counting formulas for the number of points in the finite grid fragments are located in the Appendix.

\section{Properties of crystallographic root systems}\setcounter{prop}{0}

\subsection{Definitions and notations}\setcounter{prop}{0}\

Throughout the paper, the notation from \cite{discliegrI,discliegrE} is adopted. To the simple Lie algebra of rank $n$ of the compact simple Lie group corresponds the root system $\Pi$ \cite{BB,Bour,H2}. The set of simple roots $\Delta=\{\al_1,\dots,\al_n\}$ of $\Pi$ spans the Euclidean space $\R^n$, with the symbol $\sca{\,}{\,}$ denoting its scalar product. The set of simple roots determines partial ordering $\leq$ on $\R^n$; for $\la,\nu \in \R^n$ it holds that  $\nu\leq \la$ if and only if $\la-\nu = k_1\al_1+\dots+ k_n \al_n$ with $k_i \in \Z^{\geq 0}$ for all $i\in \{1,\dots,n\}$. Note that the notions of the root system $\Pi$ and its inherent set of simple roots $\Delta$ are also developed independently on Lie theory and the sets $\Delta$ and $\Pi$ corresponding to compact simple Lie groups are called crystallographic \cite{H2}. There are two types of sets of simple roots --- the first type with roots of only one length, denoted by $A_n, \, n\geq 1$, $D_n, \, n\geq 4$, $E_6$, $E_7$, $E_8$,  and the second type
with two different lengths of roots, denoted $B_n,\, n\geq 3$, $C_n,\, n\geq 2$, $G_2$ and $F_4$. For root systems with two different lengths of roots each set of simple roots consists of short simple roots $\Delta_s$ and long simple roots $\Delta_l$, i.e. the following disjoint decomposition is given,
\begin{equation}\label{sl}
\Delta=\Delta_s\cup\Delta_l.
\end{equation}
The following notation of the standard objects \cite{BB,H2}, which are induced by the set $\Delta\subset \Pi$, is used:
\begin{itemize}
\item the highest root $\xi \in \Pi$ with respect to the partial ordering $\leq$ restricted on $\Pi$,
\item the marks $m_1,\dots,m_n\in \N$ of the highest root $\xi=m_1\al_1+\dots+m_n\al_n$ together with $m_0=1$,
\item
the Coxeter number $m=m_0+m_1+\dots+m_n$,
\item
the root lattice $Q=\Z\al_1+\dots+\Z\al_n $,
\item
the $\Z$-dual lattice to $Q$,
\begin{equation*}
 P^{\vee}=\set{\om^{\vee}\in \R^n}{\sca{\om^{\vee}}{\al}\in\Z,\, \forall \al \in \Delta}=\Z \om_1^{\vee}+\dots +\Z \om_n^{\vee},
 \end{equation*}
with 
\begin{equation}\label{aldom} 
\sca{\al_i}{ \om_j^{\vee}}=\delta_{ij},
\end{equation}
\item
the dual root lattice $Q^{\vee}=\Z \al_1^{\vee}+\dots +\Z \al^{\vee}_n$, where 
\begin{equation}\label{aldual}
\al^{\vee}_i=\frac{2\al_i}{\sca{\al_i}{\al_i}},\q i\in \{1,\dots,n\},
\end{equation}
\item the dual marks $m^{\vee}_1, \dots ,m^{\vee}_n$ of the highest dual root $\eta= m_1^{\vee}\al_1^{\vee} + \dots + m_n^{\vee} \al_n^{\vee}$ together with $m_0^\vee=1$; the marks and the dual marks are summarized in Table 1 in \cite{discliegrI},
\item the $\Z$-dual lattice to $Q^\vee$
\begin{equation*}
 P=\set{\om\in \R^n}{\sca{\om}{\al^{\vee}}\in\Z,\, \forall \al^{\vee} \in Q^\vee}=\Z \om_1+\dots +\Z \om_n,
\end{equation*}
\item 
the Cartan matrix $C$ with elements 
\begin{equation*}
 C_{ij}=\sca{\al_i}{\al^{\vee}_j},
\end{equation*}
\item the index of connection $c$ of $\Pi$ equal to the  determinant of the Cartan matrix $C$,
\begin{equation}\label{Center}
 c=\det C.
\end{equation}
\end{itemize}

\subsection{Weyl group and affine Weyl group}\setcounter{prop}{0}\

The properties of Weyl groups and affine Weyl groups can be found for example in \cite{BB,H2}.
The finite Weyl group $W$ is generated by $n$ reflections $r_\al$, $\al\in\Delta$, in $(n-1)$-dimensional `mirrors' orthogonal to simple roots intersecting at the origin:
\begin{equation*}
r_{\al_i}\x\equiv r_i \x=\x-\frac{2\sca{\x}{\al_i} }{\sca{\al_i}{\al_i}}\al_i\,,
\qquad \x\in\R^n\,.
\end{equation*}
The infinite affine Weyl group $W^{\mathrm{aff}}$ is the semidirect product of the Abelian group of translations $Q^\vee$ and of the Weyl group $W$
\begin{equation}\label{semiW}
 W^{\mathrm{aff}}= Q^\vee \rtimes W.
\end{equation}
Therefore, for each $w^\aff \in W^\aff$ there exists a unique $w \in W$ and a unique $q\dual \in Q\dual$ such that for each $\x \in \R^n$ it holds that
\begin{equation}\label{semiaff}
w^\aff \x = w\x +q\dual.
\end{equation}
 The retraction homomorphism $\psi : W^{\aff} \map W$ of the semidirect product \eqref{semiW} is given for the element of the form \eqref{semiaff} as
\begin{equation}\label{retract}
\psi (w^\aff) = w.
\end{equation}
Equivalently, $W^{\mathrm{aff}}$ is a Coxeter group generated by $n$ reflections $r_i$ and an affine reflection $r_0$ given as
\begin{equation*}
r_0 \x=r_\xi \x + \frac{2\xi}{\sca{\xi}{\xi}}\,,\qquad
r_{\xi}\x=\x-\frac{2\sca{\x}{\xi} }{\sca{\xi}{\xi}}\xi\,,\qquad \x\in\R^n\,.
\end{equation*}
The set of $n$ reflections $r_i$ together with the affine reflection $r_0$ is denoted by 
$$R=\{r_0, r_1,\dots, r_n\}$$
and the standard relations of the Coxeter group $W^{\mathrm{aff}}$
\begin{equation}\label{r2}
r^2=1, \q r\in R,
\end{equation}
hold for all its generators.

The fundamental region $F \subset \R^n$ of $W^{\mathrm{aff}}$ can be chosen as a simplex
$$F =\setb{\x\in \R^n}{\sca{\x}{\al}\geq 0,\forall \al\in\Delta,\sca{\x}{\xi}\leq 1  } $$ 
or, equivalently, as the convex hull of the points $\left\{ 0, \frac{\om^{\vee}_1}{m_1},\dots,\frac{\om^{\vee}_n}{m_n} \right\}$:
\begin{align}\label{deffun}
  F &=\setb{y_1\om^{\vee}_1+\dots+y_n\om^{\vee}_n}{y_0,\dots, y_n\in \R^{\geq 0}, \, y_0+y_1 m_1+\dots+y_n m_n=1  }.
\end{align}
Three essential properties of the fundamental region $F$ of $W^{\mathrm{aff}}$ are the following:
\begin{enumerate}
\item For any $a\in \R^n,$ there exists $a'\in F$ and $w^{\mathrm{aff}}\in W^{\mathrm{aff}},$  such that
\begin{equation}\label{fun1}
 a=w^{\mathrm{aff}}a'.
\end{equation}

\item If $a,a'\in F$ and $a'=w^{\mathrm{aff}}a$, $w^{\mathrm{aff}}\in W^{\mathrm{aff}},$ then
\begin{equation}\label{fun2}
 a'=a=w^{\mathrm{aff}}a.
\end{equation}

\item
Consider a point
$a=y_1\om^{\vee}_1+\dots+y_n\om^{\vee}_n\in F$, such that
$y_0+y_1 m_1+\dots+y_n m_n=1$.
The isotropy group
 \begin{equation}\label{stab}
 \mathrm{Stab}_{W^{\mathrm{aff}}}(a) = \setb{w^{\mathrm{aff}}\in W^{\mathrm{aff}}}{w^{\mathrm{aff}}a=a}
\end{equation}
of point $a$ is trivial, if $a\in \mathrm{int}(F)$, where $\mathrm{int}(F)$ denotes the interior of $F$, i.e. all $y_i>0$, $i=0,\dots,n$. Otherwise, if $a \notin \mathrm{int}(F)$, the group $\mathrm{Stab}_{W^{\mathrm{aff}}}(a)$ is finite and generated by such $r_i$ for which $y_i=0$, $i=0,\dots,n$.
\end{enumerate}

\subsection{Group of sign homomorphisms}\setcounter{prop}{0}\

Any homomorphism $\sigma$ from $W$ to the multiplicative group $\{1,-1\} $ is called a sign homomorphism \cite{discliegrII}. 
Two standard choices of sign homomorphisms are the trivial homomorphism and the determinant denoted as
\begin{alignat*}{2}
& \sigma^e (w) &=& \det(w),\\ 
& \one (w) &=& 1.
\end{alignat*} 

The sign homomorphisms $\sigma^l $ and $\sigma^s$ are defined on the set of generators $\set{r_\alpha} {\alpha \in \Delta}$ of $W$ as
\begin{alignat*}{1}
 \sigma^s (r_\alpha) &= \left\{ 
  \begin{array}{l l}
    -1 & \quad \text{if $\alpha \in \Delta_s, $}\\
     1 & \quad \text{otherwise,}
  \end{array} \right. \\
	\sigma^l (r_\alpha) &= \left\{ 
  \begin{array}{l l}
    -1 & \quad \text{if $\alpha \in \Delta_l, $}\\
     1 & \quad \text{otherwise.}
  \end{array} \right. 
\end{alignat*}
All sign homomorphisms form an abelian group $\Sigma$ with the pointwise operation $\cdot$
defined for $w \in W$ as
$$(\si_1 \cdot \si_2)(w)=\si_1(w)\si_2 (w). $$
The set of sign homomorphisms $\Sigma$ of a root system $\Delta$ with two different lengths of roots contains only four elements \cite{discliegrII}, i.e.
$$\Sigma = \{ \one,\, \si^e,\,\si^s,\,\si^l\}$$
and the group $\Sigma$ is in this case isomorphic to the Klein four-group.

The set of 'negative' generators from $R$ with respect to the sign homomorphism $\si$ is denoted by $R^\si$,
\begin{align}\label{Rsigma}
R^\sigma&=\setb{r\in R}{ \sigma \circ \psi \left( r\right)=-1 }.
\end{align}
This set is used in to decompose the fundamental domain $F$ into two disjoint parts. One of these parts is the subset $H^\si$ of boundaries of $F$,   
\begin{alignat*}{1}
H^\sigma &= \set{\x \in F}{(\exists r \in R^\sigma) (r\x=\x)}
\end{alignat*}
and the second set $F^\sigma\subset F$ is given by  
\begin{align}\label{Fsigma}
F^\sigma&=\setm{a\in F}{\sigma \circ \psi \left(\mathrm{Stab}_{W^{\mathrm{aff}}}(a)\right)=\{1\} }
\end{align}

The domains $F^s$, $R^s$ and $H^s$ from \cite{discliegrII} correspond in this notation to the sets $F^{\si^s}$, $R^{\si^s}$ and $H^{\si^s}$ and an analogous correspondence is valid for the long versions of these sets.
For the two standard cases of sign homomorphism $\si = \one, \si^e$ of $R^{\sigma}$ one obtains that $R^\one =\emptyset$ and $R^{\sigma^e} =R$. For the root systems with two lengths of roots it holds that $R= R^{\sigma^s}\cup R^{\sigma^l}$ and the subsets of generators $R^{s}$ and $R^{l}$ are summarized in Table 1 in \cite{discliegrII}. 
Moreover, the sets $F^{\sigma}$ and $H^{\sigma}$ are special cases of the sets $F^{\sigma}(\rho)$ and $H^{\sigma}(\rho)$ from \cite{CzHr}. Specializing Proposition 2.7 from \cite{CzHr}, the following set equality holds
\begin{equation}\label{Fsigma2}
F^\sigma=F\setminus H^\sigma,
\end{equation} 
which yields the disjoint decomposition of the fundamental domain $F$
\begin{equation}\label{decompF}
F=F^\sigma \cup H^\sigma
\end{equation}
together with crucial reformulation of the form of the boundary set $H^\sigma$,
\begin{align}\label{Hsigma}
H^\sigma&=\setm{a\in F}{\sigma \circ \psi \left(\mathrm{Stab}_{W^{\mathrm{aff}}}(a)\right)=\{\pm 1\} }.
\end{align}


\section{Affine even Weyl groups}\setcounter{prop}{0}
\subsection{Fundamental domains}\setcounter{prop}{0}\

Kernels of the non-trivial sign homomorphisms of a given Weyl group $W$ form normal subgroups $W^{\sigma}\subset W$ known as even Weyl groups \cite{KP3},
$$ W^{\sigma} \equiv \set {w \in W} { \sigma(w)=1}.$$ 
The corresponding affine even Weyl groups are the kernels of the expanded sign homomorphisms $\si \circ \psi$ or, equivalently, the semidirect products of the group of translations $Q\dual$ and the even Weyl groups $W^\si$,
$$ W_{\sigma}^{\mathrm{aff}} \equiv  \set {w^\aff \in W^\aff} { \sigma \circ \psi(w^\aff)=1} =Q^\vee \rtimes W^\sigma. $$
The fundamental domains of the even affine Weyl groups $W_\si^{\mathrm{aff}}$ are determined in the following proposition.

\begin{tvr}\label{tvrfund}
For any $r_\si \in R^\si$, the set $F\cup r_\si F^\si$ is  a fundamental domain of $W_\si^{\aff}$ satisfying the following conditions.
\begin{enumerate}
\item  For any $a\in \R^n,$ there exist $a'\in F\cup r_\sigma  F^\sigma$ and $w^{\mathrm{aff}}_\si\in W_\si^{\mathrm{aff}},$  such that
\begin{equation}\label{fun1s}
 a=w_\si^{\mathrm{aff}}a'.
\end{equation}
\item  If $a,a'\in F\cup r_\sigma  F^\si $ and $a'=w_\si^{\mathrm{aff}}a$, $w_\si^{\mathrm{aff}}\in W_\si^{\mathrm{aff}},$ then
\begin{equation}\label{FD2}
 a'=a=w_\si^{\mathrm{aff}}a.
\end{equation}
\item Consider a point $\x \in F\cup r_\si  F^\si.$ If $\x \in F^\si$ or $\x \in r_\si F^\si,$ then it holds that  
\begin{equation}\label{calc1} 
\mathrm{Stab}_{W^{\mathrm{aff}}_\si}(\x) =\mathrm{Stab}_{W^{\mathrm{aff}}}(\x),
\end{equation}
if $\x \in H^\si,$ then it holds that
$$\abs{\mathrm{Stab}_{W^{\mathrm{aff}}}(\x)} = 2\abs{\mathrm{Stab}_{W^{\mathrm{aff}}_\si}(\x)}.$$
\end{enumerate}
\end{tvr}
\begin{proof}\

\begin{enumerate}
\item Consider a point $a \in \R^n$. Due to \eqref{fun1}, there exist $a'' \in F$ and $w^\aff \in W^\aff$ such that $a=w^\aff a''.$ If $\si \circ \psi(w^\aff)=1$ then $w^\aff \in W_\si^\aff$; setting $a'=a'' $ and $w_\si^{\mathrm{aff}} = w^{\mathrm{aff}} $, condition \eqref{fun1s} is satisfied. On the other hand, suppose that $\sigma \circ \psi(w^\aff)=-1$.
Two distinct possibilities for $a'' \in F$ are determined by decomposition \eqref{decompF}. 
If $a'' \in F^\si$ then $r_\si a'' \in r_\si F^\si$ and $\si \circ \psi(w^\aff r_\si)=1$, i.e. $w^\aff r_\si \in W_\si^\aff $. Setting  $a'=r_\si a'' $, $w_\si^{\mathrm{aff}} = w^{\mathrm{aff}}r_\si $ and taking into account \eqref{r2}, condition \eqref{fun1s} is satisfied.
If $a'' \in H^\si,$ then there exists $r \in R^\si$ such that $a'' = r a''$ and $\si \circ \psi(w^\aff r)=1$, i.e. $w^\aff r \in W_\si^\aff $. Setting  $a'= a'' $, $w_\si^{\mathrm{aff}} = w^{\mathrm{aff}}r $, condition \eqref{fun1s} is satisfied again.
\item Suppose $a,a'\in F\cup r_\sigma  F^\si $ are such that $a'=w_\si^{\mathrm{aff}}a$ with $w_\si^{\mathrm{aff}}\in W_\si^{\mathrm{aff}}$. Note that the sets $F$ and $r_\si  F^\si$ are disjoint; indeed, take any $b\in F\cap r_\si  F^\si$ and consider $b'\in F^\si$ such that $r_\si b' = Ąb$. Then from \eqref{fun2} follows that $b'=b$ and thus $r_\si \in \mathrm{Stab}_{W^{\mathrm{aff}}}(b)$. Then it holds that $\sigma \circ \psi \left(\mathrm{Stab}_{W^{\mathrm{aff}}}(b)\right)=\{\pm 1\}$ and this violates definition \eqref{Fsigma}. 
Thus $F\cup r_\si  F^\si$ consists of two disjoint parts $F$ and $r_\si F^\si$ and the following cases are possible.
\begin{enumerate}
\item $a, a'\in F$. It follows immediately from \eqref{fun2} that $a=a'$.
\item $a,a' \in r_\si F^\si$. Consider $b,b'\in F^\si$ such that $a=r_\si b$ and $a'=r_\si b'$. Then $b'=r_\si w_\si^{\mathrm{aff}} r_\si b$ and from \eqref{fun2} it follows that $b=b'$ and thus $a=a'$.
\item \label{cannotoccur}$a'\in F$, $a \in r_\si F^\si$. From the equalities $a' = w_\si^\aff a=w_\si^\aff r_\si b$ with $b \in F^\si $ and from \eqref{fun2} it follows that
$a'=b$ and $w_\si^{\mathrm{aff}} r_\si \in \mathrm{Stab}_{W^{\mathrm{aff}}}(b)$. Then it holds that $\sigma \circ \psi \left(\mathrm{Stab}_{W^{\mathrm{aff}}}(b)\right)=\{\pm 1\},$ which would violate definition \eqref{Fsigma} and thus, this case cannot occur. 
\end{enumerate}
\item Firstly, consider $a\in F^\si$. Any element $w_\si^{\mathrm{aff}}\in W_\si^\aff $ stabilizing $a$ is also from $W^\aff$ and thus trivially $\mathrm{Stab}_{W^{\aff}_\si}(a) \subset \mathrm{Stab}_{W^{\aff}}(a)$. If, on the other hand, $w^{\aff}\in\mathrm{Stab}_{W^{\aff}}(a),$ then directly from definition \eqref{Fsigma} follows that $\si \circ \psi(w^\aff)=1$ and thus $w^{\aff}\in\mathrm{Stab}_{W_\si^{\aff}}(a)$, i.e. $\mathrm{Stab}_{W^{\aff}_\si}(a) = \mathrm{Stab}_{W^{\aff}}(a)$. Secondly, consider $a\in r_\si F^\si$. Then the stabilizer of a point $b\in F^\si $ such that $a= r_\si b$ is conjugated to the stabilizer of $a$, i.e. $\mathrm{Stab}_{W^{\aff}}(a) =r_\si \mathrm{Stab}_{W^{\aff}}(b)r_\si $ and one obtains $$r_\si \mathrm{Stab}_{W^{\aff}}(b)r_\si =r_\si \mathrm{Stab}_{W_\si^{\aff}}(b)r_\si = \mathrm{Stab}_{W_\si^{\aff}}(a).$$ 
If $a \in H^\si,$ then, according to \eqref{Hsigma}, the restricted homomorphism
$\si\circ\psi|_{\mathrm{Stab}_{W^{\aff}}(a)}$ maps $\mathrm{Stab}_{W^{\aff}}(a)$ surjectively onto $\{\pm 1\}$. The kernel of this homomorphism is
$\mathrm{Stab}_{W_\si^\aff}(\x)$ and thus the
isomorphism theorem gives
$$\mathrm{Stab}_{W^\aff}(\x) / \mathrm{Stab}_{W_\si^\aff}(\x) \cong \{\pm 1\}.$$
\end{enumerate}
\end{proof}

For any sign homomorphism $\si\in \Sigma$, the notion of the sets of boundaries $H^\si$ of the fundamental domain $F$ of $W^{\mathrm{aff}}$ is 
generalized to the affine even Weyl groups $W_{\sigma}^{\mathrm{aff}}$. The relation \eqref{Hsigma} is generalized by considering a second 
arbitrary sign homomorphism $\wt\si\in \Sigma$ and the subsets $H^{\wt\si,\si}$ of the boundaries of the fundamental domains 
$F\cup r_{\si}F^{\si}$ of $W_{\sigma}^{\mathrm{aff}}$ are introduced as  
\begin{equation}\label{Hsigmasigma}
H^{\wt{\si}, \si}= \set {\x \in F \cup r_{\si} F^{\si}}{\wt\si\circ\psi(\Stab_{W_{\si}^\aff}(a))=\{\pm 1 \}}.
\end{equation}
Generalizing relation \eqref{Fsigma2}, the set $F^{\wt\si,\si}$ is given as the fundamental domain $F \cup r_{\si} F^{\si}$ without its boundary points from $H^{\wt\si,\si}$,
\begin{equation}\label{Fsisi}
F^{\wt\si,\si} =(F \cup r_{\si} F^{\si})\setminus H^{\wt\si,\si}.
\end{equation}
It follows immediately from \eqref{Hsigmasigma} and \eqref{Fsisi} that, similarly to \eqref{Fsigma}, the set $F^{\wt\si,\si}$ can be expressed  as 
\begin{equation}\label{DOGhelp}
F^{\wt\si,\si}= \set {\x \in F\cup r_{\si}F^{\si}} {\wt\si\circ \psi(\Stab_{W_{\si}^{\aff}}(\x))=\{1 \}}.
\end{equation}
Note also that for the fundamental domain $F \cup r_{{\si}} F^{{\si}}$ it holds that
\begin{equation}\label{disdec}
F \cup r_{{\si}} F^{{\si}}=F^{\one,\si}.
\end{equation}
In order to determine the structure of the sets $H^{\wt{\si}, \si}$, their form is expressed using the simpler sets $H^{\si}$ in the following Proposition. 
\begin{tvr}\label{lemma123} For the sets $H^{\wt{\si},\si}$ it holds that
\begin{align}
H^{\wt{\si},\si}\cap F &= H^{\wt{\si}}\cap H^{\wt{\si} \cdot \si}. \label{HF}
\end{align}
\end{tvr}
\begin{proof}\
If $a\in F$ and $\x \in H^{\wt{\si},\si}$ then by \eqref{Hsigmasigma}
there exists an element $w^\aff \in \mathrm{Stab}_{W^\aff}(\x)$ 
such that 
\begin{align}
\si\circ \psi(w^\aff)&=1,\label{op1}\\
\wt{\si}\circ \psi(w^\aff)&= -1.\label{op2}
\end{align}
Then by \eqref{Hsigma} it holds that $a\in H^{\wt\si} $ and since also
\begin{equation}(\wt{\si}\cdot \si)\circ \psi(w^\aff) = [\wt{\si}\circ \psi(w^\aff)]\cdot[\si\circ \psi(w^\aff)]= -1,\label{op3}
\end{equation}
it follows that $a\in H^{\wt{\si} \cdot \si}$.

The statement $\x \in H^{\wt{\si}}\cap H^{\wt{\si} \cdot \si}$ is equivalent to existence of elements $w_1^\aff, w_2^\aff  \in \mathrm{Stab}_{W^\aff}(\x),$ such that 
\begin{align*}\wt{\si}\circ \psi(w_1^\aff) = -1, \\
[\wt{\si}\circ \psi(w_2^\aff)]\cdot[\si\circ \psi(w_2^\aff)]= -1. 
\end{align*}
If $\wt{\si}\circ \psi(w_2^\aff)=-1$,  $\si\circ \psi(w_2^\aff)=1$ or $\si\circ \psi(w_1^\aff) = 1,$ then equations \eqref{op1} -- \eqref{op3} are satisfied with $w^\aff =w_2^\aff$ or $w^\aff =w_1^\aff$, respectively. If $\wt{\si}\circ \psi(w_2^\aff)=1$,  $\si\circ \psi(w_2^\aff)=-1$ and $\si\circ \psi(w_1^\aff) = -1,$ then equations \eqref{op1} -- \eqref{op3} are satisfied with $w^\aff =w_1^\aff w_2^\aff$.
\end{proof}

Expressing the sets $F^{\wt{\si}, \si}$ in terms of the simpler sets $F^{\si}$, the
structure of $F^{\wt{\si}, \si}$ is uncovered in the following crucial theorem. 

\begin{thm}\label{thm123} The disjoint decomposition of the sets $F^{\wt{\si},\si}$ is the following
\begin{align}
F^{\wt\si,\si}= (F^{\wt\si}\cup  F^{\wt\si \cdot \si} ) \cup r_\si (F^{\wt\si}\cap  F^{\wt\si \cdot \si} ). \label{HF2}
\end{align}
\end{thm}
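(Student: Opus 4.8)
The plan is to prove the set equality \eqref{HF2} by working from the definition \eqref{DOGhelp} of $F^{\wt\si,\si}$, splitting the fundamental domain $F\cup r_\si F^\si$ into its two disjoint pieces $F$ and $r_\si F^\si$ (disjointness was established in the proof of Proposition~\ref{tvrfund}), and analyzing the condition $\wt\si\circ\psi(\Stab_{W_\si^\aff}(\x))=\{1\}$ on each piece separately. The right-hand side of \eqref{HF2} is also already split along $F$ versus $r_\si F^\si$: the first summand $F^{\wt\si}\cup F^{\wt\si\cdot\si}$ will turn out to lie in $F$, and the second summand $r_\si(F^{\wt\si}\cap F^{\wt\si\cdot\si})$ visibly lies in $r_\si F^\si$. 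So the whole statement reduces to two claims,
\begin{align*}
F^{\wt\si,\si}\cap F &= F^{\wt\si}\cup F^{\wt\si\cdot\si},\\
F^{\wt\si,\si}\cap r_\si F^\si &= r_\si(F^{\wt\si}\cap F^{\wt\si\cdot\si}),
\end{align*}
plus a check that the two right-hand pieces are themselves disjoint (which is automatic once we know the first lies in $F$ and the second in $r_\si F^\si$, again by the disjointness of $F$ and $r_\si F^\si$).

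For the first claim, I would take complements inside $F$: since $F^{\wt\si,\si}=(F\cup r_\si F^\si)\setminus H^{\wt\si,\si}$ by \eqref{Fsisi}, we get $F^{\wt\si,\si}\cap F = F\setminus(H^{\wt\si,\si}\cap F)$, and Proposition~\ref{lemma123} identifies $H^{\wt\si,\si}\cap F = H^{\wt\si}\cap H^{\wt\si\cdot\si}$. Using \eqref{Fsigma2} in the form $F^\tau = F\setminus H^\tau$ for $\tau=\wt\si$ and $\tau=\wt\si\cdot\si$, De Morgan gives $F\setminus(H^{\wt\si}\cap H^{\wt\si\cdot\si}) = (F\setminus H^{\wt\si})\cup(F\setminus H^{\wt\si\cdot\si}) = F^{\wt\si}\cup F^{\wt\si\cdot\si}$, which is exactly the first claim. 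This step is essentially formal once Proposition~\ref{lemma123} is in hand.

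For the second claim I would argue directly with stabilizers. Fix $\x\in r_\si F^\si$ and write $\x = r_\si b$ with $b\in F^\si$. By part~(3) of Proposition~\ref{tvrfund} (the conjugation relation $\Stab_{W^\aff_\si}(\x) = r_\si\Stab_{W^\aff_\si}(b)r_\si$, which equals $r_\si\Stab_{W^\aff}(b)r_\si$ since $b\in F^\si$), and since $\psi$ is a homomorphism with $\psi(r_\si)=\psi(r_\si)^{-1}$ while $\wt\si$ and $\si$ are sign homomorphisms, one checks that $\wt\si\circ\psi(\Stab_{W^\aff_\si}(\x)) = \{1\}$ if and only if $\wt\si\circ\psi\bigl(r_\si\,\Stab_{W^\aff}(b)\,r_\si\bigr)=\{1\}$. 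Writing out $\wt\si\circ\psi(r_\si w^\aff r_\si) = \wt\si\circ\psi(r_\si)^2\cdot\wt\si\circ\psi(w^\aff) = \wt\si\circ\psi(w^\aff)$ shows the conjugation by $r_\si$ contributes a trivial factor, so the condition becomes $\wt\si\circ\psi(\Stab_{W^\aff}(b))=\{1\}$, i.e. $b\in F^{\wt\si}$ by \eqref{Fsigma}. Wait — that would give only $r_\si F^{\wt\si}$, not $r_\si(F^{\wt\si}\cap F^{\wt\si\cdot\si})$; the extra constraint comes from remembering that $b$ is constrained to lie in $F^\si$ already, not in all of $F$. So the honest statement is: $\x=r_\si b\in F^{\wt\si,\si}$ with $b\in F^\si$ iff $b\in F^\si\cap F^{\wt\si}$. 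The remaining work is the small lemma $F^\si\cap F^{\wt\si} = F^{\wt\si}\cap F^{\wt\si\cdot\si}$, which follows from Proposition~\ref{lemma123}-style reasoning on complements: $F^\si\cap F^{\wt\si} = (F\setminus H^\si)\cap(F\setminus H^{\wt\si})$, and one shows $H^\si\cup H^{\wt\si} = H^{\wt\si}\cup H^{\wt\si\cdot\si}$ as subsets of $F$ by the Klein-four-group structure of $\Sigma$ — on any stabilizer the composite sign characters $\si$, $\wt\si$, $\wt\si\cdot\si$ are three of the four elements of $\Sigma$, and "the stabilizer is not killed by at least one of $\{\si,\wt\si\}$" is equivalent to "not killed by at least one of $\{\wt\si,\wt\si\cdot\si\}$", since the image of the stabilizer under the full map into $\Sigma^\vee$ is a subgroup and all three characters vanish on it or exactly one does.

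The main obstacle is this last algebraic identity on the boundary sets restricted to $F$, namely $H^\si\cup H^{\wt\si} = H^{\wt\si}\cup H^{\wt\si\cdot\si}$, equivalently $H^\si\setminus H^{\wt\si} = H^{\wt\si\cdot\si}\setminus H^{\wt\si}$; it is exactly the point where the Klein-four structure of $\Sigma$ is used, and it needs the observation (via the isotropy group's image being a subgroup of the character-evaluation group) that a point of $F$ lies in $H^\tau$ for exactly zero, one, or three of the four $\tau\in\Sigma$, never for exactly two with a "forbidden" pattern. Once that combinatorial fact about which pairs of sign homomorphisms can simultaneously fail to annihilate a given stabilizer is nailed down, both claims assemble routinely, and reassembling $F^{\wt\si,\si} = (F^{\wt\si,\si}\cap F)\,\cup\,(F^{\wt\si,\si}\cap r_\si F^\si) = (F^{\wt\si}\cup F^{\wt\si\cdot\si})\cup r_\si(F^{\wt\si}\cap F^{\wt\si\cdot\si})$ with the two pieces disjoint yields \eqref{HF2}.
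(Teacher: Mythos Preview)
Your proposal is correct and follows essentially the same route as the paper: both split $F^{\wt\si,\si}$ along the disjoint decomposition $F\cup r_\si F^\si$, handle the $F$-part via Proposition~\ref{lemma123} and De~Morgan exactly as you do, and then analyze the $r_\si F^\si$-part by transporting the stabilizer condition from $a=r_\si b$ back to $b$.

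The only difference is organizational. You isolate the identity $F^\si\cap F^{\wt\si}=F^{\wt\si}\cap F^{\wt\si\cdot\si}$ as a separate ``small lemma'' and justify it by a Klein-four casework on which sign homomorphisms are trivial on a given stabilizer. The paper proves the same fact inline, and more simply, by direct multiplication of characters: if $b\in F^\si\cap F^{\wt\si}$ then $\si\circ\psi$ and $\wt\si\circ\psi$ are both identically $1$ on $\Stab_{W^\aff}(b)$, hence so is their product $(\wt\si\cdot\si)\circ\psi$, giving $b\in F^{\wt\si\cdot\si}$; and conversely $b\in F^{\wt\si}\cap F^{\wt\si\cdot\si}$ forces $\si\circ\psi\equiv 1$ on the stabilizer, so $b\in F^\si$. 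This two-line argument also covers the degenerate cases $\wt\si=\one$ and $\wt\si=\si$ (where your three characters are not distinct) without special treatment.
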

\begin{proof}\
Due to the disjoint decomposition \eqref{disdec}, firstly the set equality $F^{\wt\si,\si}\cap F =F^{\wt\si}\cup  F^{\wt\si \cdot \si}$ is derived using \eqref{HF} and \eqref{Fsigma2},  
\begin{align*}
F^{\wt\si,\si}\cap F= (F^{\one,\si}\setminus H^{\wt{\si},\si})\cap F= F\setminus ( H^{\wt{\si},\si}\cap F)=F\setminus (H^{\wt{\si}}\cap H^{\wt{\si} \cdot \si})= (F\setminus H^{\wt{\si}})\cup (F\setminus H^{\wt{\si} \cdot \si}).
\end{align*}

Secondly, the set equality $F^{\wt{\si},\si}\cap r_\si F^{\si}= r_\si(F^{\wt{\si}}\cap F^{\wt{\si}\cdot \si})$ is derived. Considering an element 
\begin{equation}\label{ab}
a=r_\si b\in F^{\wt{\si},\si}\cap r_\si F^\si, 	
\end{equation}
firstly $b \in F^{\wt{\si}}$ is shown.  Since it follows directly from \eqref{ab} that $b\in F^\si$, relation
\begin{equation}\label{opF1}
\{1\}=\wt\si\circ\psi(\Stab_{W_{\si}^\aff}(a)) 	=\wt\si\circ\psi(\Stab_{W_{\si}^\aff}(r_\si b))=\wt\si\circ\psi(\Stab_{W_{\si}^\aff}(b))
\end{equation}
and definition \eqref{DOGhelp} yield that $b\in F^{\wt{\si},\si}.$  Definition \eqref{Fsigma} implies for the point $b\in F^\si$ that for all $w^\aff \in \mathrm{Stab}_{W^\aff}(b)$ holds $\si\circ \psi(w^\aff)=1.$ Relation \eqref{calc1} grants that $\mathrm{Stab}_{W^\aff}(b)=\Stab_{W_{\si}^\aff}(b)$ and thus, applying $\wt{\si}\circ\psi$ on this equation and using relation \eqref{opF1}, it follows that 
\begin{equation}\label{opF2}
\wt\si\circ\psi(\Stab_{W^\aff}(b))=\wt\si\circ\psi(\Stab_{W_{\si}^\aff}(b))=\{1\},
\end{equation}
and $b\in F^{\wt\si} $ is proven. In order to prove $b \in F^{\wt{\si}\cdot\si}$, relations \eqref{Fsigma} and \eqref{opF2} guarantee for point $b \in F^\si$ and for all $w^\aff \in \mathrm{Stab}_{W^\aff}(b)$ that
\begin{equation}(\wt{\si}\cdot \si)\circ \psi(w^\aff) = [\wt{\si}\circ \psi(w^\aff)]\cdot[\si\circ \psi(w^\aff)]= 1.\label{opF3}
\end{equation}
and $b\in F^{\wt{\si}\cdot\si}$ is shown. Defining relation \eqref{ab} thus implies that $a \in r_\si (F^{\wt\si}\cap  F^{\wt\si \cdot \si} ).$ 

For the converse inclusion, let $a=r_\si b\in r_\si (F^{\wt\si}\cap  F^{\wt\si \cdot \si} ) $. Then $b\in F^{\wt\si}\cap  F^{\wt\si \cdot \si}$ implies for all $w^\aff \in \mathrm{Stab}_{W^\aff}(b)$ that $\wt \si\circ \psi(w^\aff)=1$ and together with \eqref{opF3} yields $ \si\circ \psi(w^\aff)=1$ and $b\in F^{\si}$. Then also, from \eqref{calc1},  $\mathrm{Stab}_{W^\aff}(b)=\Stab_{W_{\si}^\aff}(b)$ and \eqref{opF2} implies validity of \eqref{opF1}, i.e. $a \in F^{\wt{\si},\si}$.
\end{proof}

\subsection{Action of $W^\si$ on the maximal torus $\R^n/Q^{\vee}$}\setcounter{prop}{0}\

The isotropy subgroups of ${W_\si^{\aff}}$ and their orders are for any $a\in\R^n$ denoted by
\begin{equation*}
\mathrm{Stab}_{W_\si^{\aff}}(a) = \setb{w_\si^{\aff}\in W_\si^{\aff}}{{w_\si^{\aff}}a=a},\q h^\si(a)=|\mathrm{Stab}_{W_\si^{\aff}}(a)|.
\end{equation*}
Related functions $\ep^\si:\R^n\map\N$ are defined by the relation
\begin{equation}\label{epsi}
\ep^\si(a)=\frac{|W^\si|}{h^\si(a)}.
\end{equation}
Since for any $w_\si^{\aff}\in W_\si^{\aff}$ are the stabilizers $\mathrm{Stab}_{W_\si^{\aff}}(a) $ and $\mathrm{Stab}_{W_\si^{\aff}}(w^{\aff}a) $ conjugated, it holds that
\begin{equation}\label{epshift}
\ep^\si(a)=\ep^\si (w_\si^{\aff}a),\q w_\si^{\aff}\in W_\si^{\aff}.
\end{equation}

Considering the standard action of subgroups $W^\si$ on the torus $\R^n/Q^{\vee}$, the isotropy group of $x\in \R^n/Q^{\vee}$ and its order are denoted by
 \begin{equation}\label{hxe}
\wt{h}^\si(x)\equiv |\mathrm{Stab^\si} (x)|,\q\mathrm{Stab^\si} (x)=\set{w\in W^\si}{wx=x}.
\end{equation}
Denoting the orbit and its order by
 \begin{equation*}
\wt{\ep}^\si(x)\equiv |W^\si x|,\q W^\si x=\set{wx\in \R^n/Q^{\vee} }{w\in W^\si},
\end{equation*}
the orbit-stabilizer theorem gives that
\begin{equation*}
\wt{\ep}^\si(x)=\frac{|W^\si|}{\wt{h}^\si(x)}.
\end{equation*}
The following proposition summarizes properties of the action of $W^\si$ on $\R^n/Q\dual$.

\begin{tvr}\label{torus}
\begin{enumerate}
\item For each $x\in \R^n/Q^{\vee},$ there exists $x'\in F^{\one,\si}/Q\dual$ and $w\in W^{\si}$ such that
\begin{equation}\label{rfun1E}
x=wx'.
\end{equation}
\item Consider $x,x' \in F^{\one,\si}/Q\dual$ and $w\in W^\si$  such that $wx=x'$, then
\begin{equation}\label{rfun2E}
x=x'=wx.
\end{equation}
\item Consider a point $x \in F^{\one,\si}/Q^{\vee}$ i.e. $x=\x+Q^{\vee},$ with $\x \in F^{\one,\si}.$ For the isotropy group it holds that
\begin{equation}\label{rfunstabE} 
\mathrm{Stab^\si}(x) \cong \mathrm{Stab}_{W^{\mathrm{aff}}_\si}(\x).
\end{equation}
\end{enumerate}
\end{tvr}
\begin{proof}\
\begin{enumerate}
\item Follows directly from (\ref{fun1s}).
\item Follows directly from (\ref{FD2}).
\item The isomorphism in relation \eqref{rfunstabE}, denoted by $\psi^\si_\x,$ is defined as the retraction homomorphism $\psi$ 
restricted to the isotropy group $\mathrm{Stab}_{{W}^{\mathrm{aff}}_\si}(\x)$, 
i.e.
$$\psi^\si_\x \equiv \psi|_{\mathrm{Stab}_{W^{\mathrm{aff}}_\si}(\x)}.$$
In order to prove surjectivity, note that assuming validity of $\x = w^{\aff}\x = w\x +q\dual$ implies $\x -w\x =q\dual \in Q\dual,$
i.e. it follows that $wx=x$ and vice versa. Considering injectivity, it holds for the kernel of $\psi^\si_\x$  that
$$\ker \psi^\si_\x =\ker \psi \cap \mathrm{Stab}_{W_\si^\aff}(\x),$$
and the kernel of $\psi$ consists of all possible $Q\dual-$translations. The only translation of $Q\dual$ that stabilizes $\x$ is the zero translation, therefore 
$\ker \psi^\si_a$ is trivial.
\end{enumerate}
\end{proof}

Equation \eqref{rfunstabE} implies the following straightforward relation between the orders of isotropy groups in $\R^n/Q\dual$ and 
in $\R^n$,
\begin{align}
\ep^\si(a)& = \wt{\ep}^\si(a+Q\dual),\label{epsilontilda} \\ \ h^\si(a)&=\wt{h}^\si(a+Q\dual).
\end{align}
Note that instead of  $\wt{h}^{\one}(x)$ and $\wt{\ep}^{\one}(x)$, the symbols $h_x$ and $\ep(x)$ are used in \cite{discliegrI} for $\abs{\Stab(x)}$ and $\abs{W x}$, respectively. The calculation
procedure of the coefficients $\wt{h}^{\one}(x)$ is detailed in \S 3.7 in \cite{discliegrI}.
Having calculated from this procedure and relation \eqref{epsilontilda} the values of  $h^{\one}(a)$, relation \eqref{calc1} allows to determine the remaining values $h^{\si}(a)$ for any $ \x \in F$
as
\begin{alignat}{1}\label{hcalc}
 h^\si(\x) &= \left\{ 
  \begin{array}{l l}
    \frac{1}{2}h^{\one}(\x) & \quad \text{if $\x \in H^{\si},$}\\
    h^{\one}(\x) & \quad \text{otherwise.}
  \end{array} \right. 
\end{alignat}
The last step is to extend the values of $h^{\si}(a),\, \x \in F$ to the entire fundamental domain $F^{\one,\si}$ of $W_{\sigma}^{\mathrm{aff}}$ via the following relation
$$h^\si (r_\si\x) = h^\si(\x). $$
Finally, the coefficients $\ep^\si(a),\,\x \in F^{\one,\si}$ are determined from $h^{\si}(a)$ by equation \eqref{epsi}.

\subsection{Dual affine Weyl group and its even subgroups}\setcounter{prop}{0}\

The dual affine Weyl group $\widehat{W}^{\mathrm{aff}}$ is a semidirect product of the group of shifts from the root lattice $Q$ and the Weyl group $W$,
\begin{equation}\label{directd}
 \widehat{W}^{\mathrm{aff}}= Q \rtimes W.
\end{equation}
Therefore, for each $\wh w^\aff \in \wh W^\aff$ there exists a unique $w \in W$ and a unique $q \in Q$ such that for each $b \in \R^n$ it holds that
\begin{equation}\label{semiaff2}
\wh w^\aff b = w b +q.
\end{equation}
 The dual retraction homomorphism $\wh \psi : \wh W^{\aff} \map W$ of the semidirect product \eqref{directd} is given for the element of the form \eqref{semiaff2} as
\begin{equation}\label{retract2}
\wh \psi ( \wh w^\aff) = w.
\end{equation}

Equivalently, the dual affine Weyl group $\widehat{W}^{\mathrm{aff}}$ is generated by reflections $r_i$ and the reflection $r_0^{\vee}$ given by
\begin{equation*}
r_0^{\vee} \x=r_{\eta} \x + \frac{2\eta}{\sca{\eta}{\eta}}, \q r_{\eta}\x=\x-\frac{2\sca{\x}{\eta} }{\sca{\eta}{\eta}}\eta,\q \x\in\R^n.
\end{equation*}
The set of generators of  $\widehat{W}^{\mathrm{aff}}$ is denoted by $R\dual$, 
$$R\dual = \{r\dual_0,\,r_1,\,\dots,\,r_n\}.$$
Similarly to \eqref{deffun}, the fundamental region $F^\vee$ of $\widehat{W}^{\mathrm{aff}}$ is the convex hull of the vertices $\left\{ 0, \frac{\om_1}{m^{\vee}_1},\dots,\frac{\om_n}{m^{\vee}_n} \right\}$,
\begin{align}\label{defdfun}
F^\vee &=\setb{z_1\om_1+\dots+z_n\om_n}{z_0,\dots, z_n\in \R^{\geq 0}, \, z_0+z_1 m_1^{\vee}+\dots+z_n m^{\vee}_n=1} .
\end{align}

The corresponding dual affine even Weyl groups are the kernels of the expanded sign homomorphisms $\si \circ \wh{\psi}$ or, equivalently, the semidirect products of the group of translations $Q$ and the even Weyl groups $W^\si$,
\begin{equation*}
 \widehat{W}^{\mathrm{aff}}_\si=\setb {\widehat{w}^\aff \in \widehat{W}^\aff} { \sigma\circ \wh{\psi}(\widehat{w}^\aff)=1}= Q \rtimes W^\si.
\end{equation*}
The set of generators of the affine Weyl group $\widehat{W}^\aff$ with negative values of the sign homomorphisms $\si\circ \wh{\psi}$ is denoted by $R^{\vee\si}$,
\begin{equation*}
 R^{\vee\sigma} = \set{r\in R\dual}{\si\circ\wh{\psi}(r)=-1}.
\end{equation*}
Analogously to \eqref{Hsigma}, the set $H^{\vee\sigma}$ is defined as 
\begin{alignat*}{1}
H^{\vee\sigma} &=\set{b \in F\dual}{\si\circ\wh{\psi}(\Stab_{\wh{W}^{\aff}}(b))=\{\pm 1\}}.
\end{alignat*}
and similarly to \eqref{Fsigma} and \eqref{Fsigma2}, the following holds for the domain $F^{\vee\sigma}$,
 \begin{alignat*}{1}
F^{\vee\sigma} &= F\dual\setminus H^{\vee\si} =\set{b \in F\dual}{\si\circ\wh{\psi}(\Stab_{\wh{W}^{\aff}}(b))= \{1\}}.
\end{alignat*}
The fundamental domains of the dual even affine Weyl groups $\wh{W}_{\si}^{\aff}$ are determined analogously to as in Proposition \ref{tvrfund}.
\begin{tvr}\label{tvrfunddual}The set $F^\vee \cup r_\si F^{\vee\si}$ is for any $r^\vee_\si \in R^{\vee\si}$ a fundamental domain of $\wh{W}_\si^{\aff}$ satisfying the following conditions.
\begin{enumerate}
\item For any $b\in \R^n,$ there exist $b' \in F\dual\cup r^\vee_\si F^{\vee\si}$ and $\wh{w}_{\si}^{\aff} \in \wh{W}_{\si}^{\aff}$ such that
\begin{equation}\label{fun1Ed}
\wh{w}_{\si}^{\aff}b =b'.
\end{equation}
\item If $b,b' \in F\dual\cup r^\vee_\si F^{\vee\si}$ and $\wh{w}_{\si}^{\aff}b=b',$ $\wh{w}_{\si}^{\aff} \in \wh{W}_{\si}^{\aff},$ then
\begin{equation}\label{fun2Ed}
b=b'=\wh{w}_{\si}^{\aff}b.
\end{equation}
\item Consider a point $b \in F\dual \cup r^\vee_\si F^{\vee\si}.$ If $b \in F^{\vee\si}$ or $b \in r^\vee_\si F^{\vee\si},$ then it holds that 
\begin{equation}\label{fun3Ed}
\mathrm{Stab}_{\widehat{W}^{\mathrm{aff}}_\si}(b) =\mathrm{Stab}_{\widehat{W}^{\mathrm{aff}}}(b),
\end{equation}
if $b \in H^{\vee\si},$ then it holds that
$$\abs{\mathrm{Stab}_{\widehat{W}^{\mathrm{aff}}}(b)} = 2\abs{\mathrm{Stab}_{\widehat{W}^{\mathrm{aff}}_\si}(b)}.$$
\end{enumerate}
\end{tvr}
The subsets $H^{\vee\wt\si,{\si}}$ of the boundaries of the fundamental domains 
$F\dual\cup r^\vee_{{\si}}F^{\vee{\si}}$ are given as
\begin{equation}\label{Hsigmasigmadual}
H^{\vee\wt\si,{\si}} = \set{b\in F\dual\cup r^\vee_{{\si}}F^{\vee{\si}} }{\wt\si\circ \wh{\psi} (\Stab_{\wh{W}^{\aff}_{{\si}}}(b))=\{\pm 1\}}
\end{equation}
and the dual analogue of  $F^{\wt\si,{\si}}$ is given as
\begin{equation}\label{dualweightdomain}
F^{\vee\wt\si,{\si}}=(F^{\vee} \cup r^\vee_{{\si}} F^{\vee{\si}})\setminus H^{\vee\wt\si,{\si}}=\set{b \in F\dual\cup r^\vee_{{\si}} F^{\vee{\si}} }{\wt\si \circ \wh{\psi} (\Stab_{\wh{W}_{{\si}}^{\aff}}(b))=\{1 \}}.
\end{equation}
The structure of the sets $F^{\vee\wt{\si}, \si}$ and $H^{\vee\wt{\si}, \si}$ is expressed using the simpler sets $H^{\vee\si}$, $F^{\vee\si}$ as in Proposition \ref{lemma123} and Theorem \ref{thm123}. 
\begin{thm}\label{thm123d} For the sets $F^{\vee\wt{\si},\si}$ and $H^{\vee\wt{\si},\si}$ it holds that
\begin{align}
F^{\vee\wt\si,\si}&= (F^{\vee\wt\si}\cup  F^{\vee\wt\si \cdot \si} ) \cup r^\vee_\si (F^{\vee\wt\si}\cap  F^{\vee\wt\si \cdot \si} ),\label{HFd2}\\
H^{\vee\wt{\si},\si}\cap F^\vee &= H^{\vee\wt{\si}}\cap H^{\vee\wt{\si} \cdot \si}.\label{HFd}
\end{align}
\end{thm}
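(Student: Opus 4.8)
The plan is to transcribe the proofs of Proposition~\ref{lemma123} and Theorem~\ref{thm123} almost verbatim, systematically replacing the affine data $(W^\aff,\psi,r_\si,F,F^\si,H^\si)$ by the dual affine data $(\wh W^\aff,\wh\psi,r^\vee_\si,F^\vee,F^{\vee\si},H^{\vee\si})$. This is legitimate because every structural fact invoked in Section~3.1 has an exact dual counterpart already recorded in Section~3.3: the dual decomposition $F^\vee=F^{\vee\si}\cup H^{\vee\si}$ together with $F^{\vee\si}=F^\vee\setminus H^{\vee\si}$ and the dual reformulation $H^{\vee\si}=\{b\in F^\vee\mid \si\circ\wh\psi(\Stab_{\wh W^\aff}(b))=\{\pm1\}\}$; Proposition~\ref{tvrfunddual} in place of Proposition~\ref{tvrfund}; and the identity $F^\vee\cup r^\vee_\si F^{\vee\si}=F^{\vee\one,\si}$, which is the dual of \eqref{disdec} and follows from \eqref{dualweightdomain} with $\wt\si=\one$ since $\one\circ\wh\psi\equiv1$ forces $H^{\vee\one,\si}=\emptyset$.

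\textbf{Proof of \eqref{HFd}.} Let $b\in F^\vee$. If $b\in H^{\vee\wt\si,\si}$, then by \eqref{Hsigmasigmadual} and \eqref{fun3Ed} there is $\wh w^\aff\in\Stab_{\wh W^\aff}(b)$ with $\si\circ\wh\psi(\wh w^\aff)=1$ and $\wt\si\circ\wh\psi(\wh w^\aff)=-1$; the second equality gives $b\in H^{\vee\wt\si}$ and, since $(\wt\si\cdot\si)\circ\wh\psi(\wh w^\aff)=[\wt\si\circ\wh\psi(\wh w^\aff)]\cdot[\si\circ\wh\psi(\wh w^\aff)]=-1$, also $b\in H^{\vee\wt\si\cdot\si}$. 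Conversely, from $b\in H^{\vee\wt\si}\cap H^{\vee\wt\si\cdot\si}$ one obtains $\wh w_1^\aff,\wh w_2^\aff\in\Stab_{\wh W^\aff}(b)$ with $\wt\si\circ\wh\psi(\wh w_1^\aff)=-1$ and $[\wt\si\circ\wh\psi(\wh w_2^\aff)]\cdot[\si\circ\wh\psi(\wh w_2^\aff)]=-1$; the same three-way case split as in the proof of Proposition~\ref{lemma123} produces a witness ($\wh w_2^\aff$, $\wh w_1^\aff$, or $\wh w_1^\aff\wh w_2^\aff$) for $b\in H^{\vee\wt\si,\si}$.

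\textbf{Proof of \eqref{HFd2}.} I would intersect both sides with the two disjoint pieces of $F^\vee\cup r^\vee_\si F^{\vee\si}$ (disjointness is established inside Proposition~\ref{tvrfunddual}). On $F^\vee$, using $F^\vee\cup r^\vee_\si F^{\vee\si}=F^{\vee\one,\si}$, relation \eqref{HFd}, and $F^{\vee\wt\si}=F^\vee\setminus H^{\vee\wt\si}$,
\[ F^{\vee\wt\si,\si}\cap F^\vee=F^\vee\setminus\bigl(H^{\vee\wt\si}\cap H^{\vee\wt\si\cdot\si}\bigr)=(F^\vee\setminus H^{\vee\wt\si})\cup(F^\vee\setminus H^{\vee\wt\si\cdot\si})=F^{\vee\wt\si}\cup F^{\vee\wt\si\cdot\si}. \]
On $r^\vee_\si F^{\vee\si}$, for $b\in F^{\vee\si}$ the conjugation identity $\Stab_{\wh W^\aff}(r^\vee_\si b)=r^\vee_\si\Stab_{\wh W^\aff}(b)r^\vee_\si$ (from \eqref{fun2Ed}) together with $\Stab_{\wh W^\aff}(b)=\Stab_{\wh W_\si^\aff}(b)$ (from \eqref{fun3Ed}) shows, exactly as in the proof of Theorem~\ref{thm123}, that $r^\vee_\si b\in F^{\vee\wt\si,\si}$ iff $\wt\si\circ\wh\psi(\Stab_{\wh W^\aff}(b))=\{1\}$; since $b\in F^{\vee\si}$ already forces $\si\circ\wh\psi(\Stab_{\wh W^\aff}(b))=\{1\}$, this is equivalent to $b\in F^{\vee\wt\si}\cap F^{\vee\wt\si\cdot\si}$, i.e. $F^{\vee\wt\si,\si}\cap r^\vee_\si F^{\vee\si}=r^\vee_\si(F^{\vee\wt\si}\cap F^{\vee\wt\si\cdot\si})$. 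Adding the two pieces gives \eqref{HFd2}.

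\textbf{Main obstacle.} There is no genuinely new difficulty: the entire content is the careful bookkeeping of replacing $\psi,W^\aff,F,F^\si,H^\si,r_\si$ by their hatted versions and checking that the exact dual of each cited relation (\eqref{fun2Ed}, \eqref{fun3Ed}, the dual of \eqref{Fsigma2}, the dual of \eqref{disdec}, Proposition~\ref{tvrfunddual}) is indeed available. Since all of these are provided in Section~3.3, the translation of the proofs of Proposition~\ref{lemma123} and Theorem~\ref{thm123} goes through without change.
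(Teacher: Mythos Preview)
Your proposal is correct and takes essentially the same approach as the paper: the paper does not give an explicit proof of Theorem~\ref{thm123d} but merely states that the result follows as in Proposition~\ref{lemma123} and Theorem~\ref{thm123}, and you have carried out precisely this dualization, invoking the dual ingredients \eqref{fun2Ed}, \eqref{fun3Ed}, \eqref{dualweightdomain} and Proposition~\ref{tvrfunddual} in place of their affine counterparts. The only minor compression is that on the piece $r^\vee_\si F^{\vee\si}$ you implicitly use $F^{\vee\wt\si}\cap F^{\vee\wt\si\cdot\si}\subset F^{\vee\si}$ (which follows since $\wt\si=1$ and $\wt\si\cdot\si=1$ on the stabilizer force $\si=1$ there), but this is exactly the converse-inclusion step made explicit in the paper's proof of Theorem~\ref{thm123}, to which you correctly defer.
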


\subsection{Action of $W^\si$ on the torus $\R^n/MQ$}\setcounter{prop}{0}\

The isotropy subgroups of ${\widehat W_\si^{\aff}}$ are for any $b\in\R^n$ denoted by
\begin{equation*}
\mathrm{Stab}_{\widehat W_\si^{\aff}}(b) = \setb{w_\si^{\aff}\in \widehat W_\si^{\aff}}{{w_\si^{\aff}}b=b}
\end{equation*}
and the orders of the stabilizers $\mathrm{Stab}_{\widehat W_\si^{\aff}}(b/M),\,M\in \N$ are denoted by
\begin{equation}\label{hM}
h^{\vee\si}_M(b) = \abs{\mathrm{Stab}_{\widehat W_\si^{\aff}}\left(\frac{b}{M}\right)} .
\end{equation}
Invariance of the root lattice $Q$ with respect to the Weyl group $W$ induces invariance of $Q$ with respect to all its even subgroups $W^\si$. This implies that the action of any $W^\si$ on the quotient group $\R^n/MQ$ is well-defined for any $M\in \N$.  
Considering this action of subgroups $W^\si$ on the torus $\R^n/MQ$, the isotropy group of $\la \in \R^n/MQ$ and its order are denoted by 
\begin{equation}\label{hla}
\wt{h}^{\vee\si}_M (\la)\equiv |\mathrm{Stab}^{\vee\si} (\la)|,\q \mathrm{Stab}^{\vee\si} (\la)=\set{w\in W^\si}{w\la=\la}.
\end{equation}
The following proposition generalizes Proposition \ref{torus}, property (3)  to the action of $W^\si$ on $\R^n/MQ$.
\begin{tvr}
Consider a point $\la\in MF^{\vee\one,\si} \cap \R^n/MQ$, i.e. $\la=b+MQ$, with $b \in MF^{\vee\one,\si}.$ For the isotropy group it holds that
\begin{equation}\label{rfunstab2}
\mathrm{Stab}^{\vee\si} (\la) \cong \mathrm{Stab}_{\widehat{W}^{\mathrm{aff}}_\si}\left(\frac{b}{M}\right).
\end{equation}
\end{tvr}
\begin{proof}\
The isomorphism in relation \eqref{rfunstab2}, denoted by $\wh{\psi}^\si_b,$ is defined as the retraction homomorphism $\wh{\psi}$ 
restricted to the isotropy group $\mathrm{Stab}_{\wh{W}^{\mathrm{aff}}_\si}(b/M)$, 
i.e.
$$\wh{\psi}^\si_b \equiv \wh{\psi}|_{\mathrm{Stab}_{\wh{W}^{\mathrm{aff}}_\si}(\frac{b}{M})}.$$
In order to prove surjectivity, note that assuming validity of $b/M =w^{\aff}(b/M)= w(b/M) +q$ implies $b -wb =Mq \in MQ,$
i.e. it follows that $w\la=\la$ and vice versa. Considering injectivity, it holds for the kernel of $\wh{\psi}^\si_b$ that
$$\ker \wh{\psi}^\si_b =\ker \wh{\psi} \cap \mathrm{Stab}_{\widehat{W}_\si^\aff}\left(\frac{b}{M}\right).$$
and the kernel of $\wh{\psi}$ consists of all possible $Q-$translations. The only translation of $Q$ that stabilizes $b/M$ is the zero translation, therefore 
$\ker \wh{\psi}^\si_b$ is trivial.
\end{proof}

Equation \eqref{rfunstab2} implies the following straightforward relation between the orders of isotropy groups in $P/MQ$ and 
$P$,
\begin{equation}\label{epsilontilda2}
h_M^{\vee\si}(b)=\wt{h}_M^{\vee\si}(b+MQ). 
\end{equation}
Note that instead of $\wt{h}_M^{\vee\one}(\la)$, the symbol $h_{\la}^{\vee}$ is used in \cite{discliegrII} for $\abs{\Stab^{\vee}(\la)}$.
The calculation procedure of the coefficients $\wt{h}_M^{\vee\one}(x)$ is detailed in \S 3.7 in \cite{discliegrI}.
Having calculated from this procedure and relation \eqref{epsilontilda2} the values of $h_M^{\vee\one}(b)$, relation \eqref{fun3Ed} allows to determine the remaining values
$h^{\vee \si}_M(b)$ for any $b\in MF^\vee$ as
\begin{alignat*}{1}
h^{\vee \si}_M(b) &= \left\{ 
  \begin{array}{l l}
    \frac{1}{2}h^{\vee\one }_M(b) & \quad \text{if $b/M \in H^{\vee\si},$}\\
    h^{\vee\one }_M(b) & \quad \text{otherwise.}
  \end{array} \right.
\end{alignat*} 
The last step is to extend the values of $h^{\vee\si}_M(b),\, b\in MF^\vee$ to the entire magnified fundamental domain $MF^{\vee\one,\si}$ of $\widehat W_{\sigma}^{\mathrm{aff}}$ via the following relation
$$h^{\vee \si}_M (r_\si b) = h^{\vee \si}_M(b). $$

\section{Orbit functions}\setcounter{prop}{0}

\subsection{Ten types of orbit functions}\setcounter{prop}{0}\

Consider a sign homomorphism $\si \in \Sigma$ and the corresponding even subgroup $W^\si \subset W$. Taking another sign homomorphism $\wt\si\in \Sigma$ and a parameter $b \in \R^n$, the most general form of orbit functions $\Psi_b^{\wt{\si},{\si}}: \R^n\map \Com$ is introduced as
\begin{equation}\label{genf}
\Psi_b^{\wt{\si},{\si}}(\x)= \sum_{w \in W^{\si}} {\wt{\si}(w)\e^{2\pi \im \scalar{wb}{\x}}}.
\end{equation}
This general definition leads to three types of orbit functions for root systems with one root-length and to ten types of orbit functions for root systems with two root-lengths. 

If $\si=\wt\si = \one,$ then the general form of orbit functions \eqref{genf} specializes to the standard symmetric orbit sums. The symmetric orbit sums, studied already in \cite{Bour}, are called $C-$functions and reviewed extensively in \cite{KP1}. These orbit functions, summed over to the entire Weyl group, are denoted by the symbol $\Phi$ in \cite{discliegrI,discliegrII}, 
$$\Phi\equiv\Psi^{\one,\one}.$$
If $\si=\one$ and $\wt\si = \si^e,$ then the general form of orbit functions \eqref{genf} specializes to the standard antisymmetric orbit sums. The antisymmetric orbit sums, studied also in \cite{Bour}, are called $S-$functions and reviewed extensively in \cite{KP2}. These most ubiquitous orbit functions, which also appear in the Weyl character formula, are denoted by the symbol $\varphi$ in \cite{discliegrI,discliegrII}, 
$$\varphi\equiv\Psi^{\si^e,\one}.$$
Both choices $\si=\si^e,\,\wt\si = \one$ and $\si=\si^e,\,\wt\si = \si^e$ in \eqref{genf} lead to the same type of functions. These orbit functions, called $E-$functions, are sums with respect to the standard even Weyl group. They are reviewed in \cite{KP3}, denoted by $\Xi$ in \cite{discliegrE} and by $\Xi^{e+}$ in \cite{Ef2d},
\begin{alignat*}{1}
\Xi^{e+}&\equiv \Psi^{\one,\si^e}=\Psi^{\si^e,\si^e}.
\end{alignat*}
The standard $C-$, $S-$, and $E-$functions above exist for all root systems. 

For root systems with two different lengths of roots there exist two more types of $S-$functions called $S^s-$ and $S^l-$functions in \cite{SsSlcub} and denoted by $\varphi^{s}$ and $\varphi^{l}$ in \cite{discliegrII},     
\begin{alignat*}{1}
\varphi^{s}&\equiv \Psi^{\si^s,\one},\\
\varphi^{l}&\equiv \Psi^{\si^l,\one}.
\end{alignat*}
The following combinations of $\si$ and $\wt\si$ lead to two more types of $E-$functions denoted by $\Xi^{s+}$ and $\Xi^{l+}$ in \cite{Ef2d},
\begin{alignat*}{1}
\Xi^{s+}&\equiv \Psi^{\one,\si^s}=\Psi^{\si^s,\si^s},\\
\Xi^{l+}&\equiv \Psi^{\one,\si^l}=\Psi^{\si^l,\si^l},
\end{alignat*}
and lastly the remaining choices $\wt{\si}\neq \si$ and $\si \neq \one$ lead to the three types of functions called mixed $E-$functions and are denoted by $\Xi^{e-}$, $\Xi^{s-}$ and $\Xi^{l-}$ in \cite{Ef2d},
\begin{alignat*}{1}
\Xi^{e-}&\equiv \Psi^{\si^l,\si^e}=\Psi^{\si^s,\si^e},\\
\Xi^{s-}&\equiv \Psi^{\si^l,\si^s}=\Psi^{\si^e,\si^s},\\
\Xi^{l-}&\equiv \Psi^{\si^s,\si^l}=\Psi^{\si^e,\si^l}.
\end{alignat*}

\subsection{Argument and label symmetries}\setcounter{prop}{0}\

First, the argument symmetry of the general form of orbit functions \eqref{genf} is investigated  by restricting their label to the weight lattice $b \in P.$ It appears that the composition of the sign homomorphism $\wt\si \in \Sigma$ and the retraction homomorphism $\psi$, given by \eqref{retract}, determines the sign change of the argument invariance. The following proposition describes the argument invariance properties of all types of orbit functions. 
\begin{tvr}\label{inv}
 Let $b \in P,$ then for any $w^{\aff} \in W_{\si}^{\aff}$ and any $\x \in \R^n$ it holds that
\begin{equation}\label{xinvariant}
\Psi_b^{\wt{\si},\si}(w^{\aff}\x)=\wt{\si}\circ\psi(w^{\aff})\cdot\Psi_b^{\wt{\si},\si}(\x).
\end{equation}
Additionally, the values of all orbit functions $\Psi_b^{\wt{\si},\si}$ are zero for $\x'\in H^{\wt{\si},\si}$, 
\begin{equation}\label{zerop}
\Psi_b^{\wt{\si},\si}(\x ')=0,\ \x ' \in H^{\wt{\si},\si}. 
\end{equation}
\end{tvr}
\begin{proof}\
Consider an element of the even affine Weyl group $W_{\si}^{\aff}$ of the form $w^{\aff}\x=w'a+q\dual,$ with
$w' \in W^{{\si}}$ and $q\dual\in Q\dual$. Since the lattice $P$ is $W-$invariant it holds that  $wb \in P$ for any $w\in W$. Thus $\scalar{wb}{q\dual}\in \Z$ and it follows that 
$$\Psi_b^{\wt{\si},\si}(w^{\aff}\x)= \sum_{w \in W^{\si}} {\wt{\si}(w)\e^{2\pi \im \scalar{wb}{w'\x+q\dual}}}=\sum_{w \in W^{\si}} {\wt{\si}(w)\e^{2\pi \im \scalar{wb}{w'\x}}}.$$
Rearranging the terms yields
$$\sum_{w \in W^{\si}} {\wt{\si}(w)\e^{2\pi \im \scalar{wb}{w'\x}}}=\wt{\si}(w')\cdot \sum_{w \in W^{\si}} {\wt{\si}(w)\e^{2\pi \im \scalar{wb}{\x}}}=\wt{\si}\circ\psi(w^\aff)\cdot\Psi_b^{\wt{\si},\si}(\x).$$
If $\x '\in H^{\wt{\si},\si},$ then by \eqref{Hsigmasigma} there exists an element $w^{\aff} \in \Stab_{W_{\si}^{\aff}}(\x ')$ such that $\wt{\si}\circ\psi(w^{\aff})= -1$. Therefore it follows that
$$\Psi_b^{\wt{\si},\si}(\x ')=\Psi_b^{\wt{\si},\si}(w^{\aff}\x ')=\wt{\si}\circ\psi(w^{\aff})\cdot\Psi_b^{\wt{\si},\si}(\x ')= -\Psi_b^{\wt{\si},\si}(\x ').$$
\end{proof}

Secondly, the label symmetry of the general form of orbit functions \eqref{genf} is induced by restricting their arguments to the refined coweight lattice $a \in \frac{1}{M}P^{\vee}.$ It appears that the composition of the sign homomorphism $\wt\si \in \Sigma$ and the dual retraction homomorphism $\wh \psi$, given by \eqref{retract2}, determines the sign change of the label invariance. The following proposition summarizes the label invariance properties of all types of orbit functions in a compact form.
\begin{tvr}\label{labelthm}
Let $\x \in \frac{1}{M}P^{\vee},$ then for any $\wh{w}^{\aff}\in \wh{W}_{\si}^{\aff}$ and any $b \in \R^n$ it holds that
$$\Psi_{M\wh{w}^{\aff}\left(\frac{b}{M}\right)}^{\wt{\si},\si}(\x)= \wt{\si}\circ\wh{\psi}(\wh{w}^{\aff})\cdot\Psi_{b}^{\wt{\si},\si}(\x).$$
Additionally,  the orbit function $\Psi_{b'}^{\wt{\si},\si}$ vanishes for any $b ' \in MH^{\vee\wt{\si},\si}$,
$$\Psi_{b'}^{\wt{\si},\si}(a)=0,\ b' \in MH^{\vee\wt{\si},\si}. $$
\end{tvr}
\begin{proof}
Consider an element of the dual affine even Weyl group of the form $\wh{w}^{\aff} b=w'b+q$, with
$w' \in W^{\si}$ and $q\in Q$.  Since the lattice $Q$ is $W-$invariant it holds that  $wq \in Q$ for any $w\in W$. Considering that $Ma \in P^{\vee},$ it holds that $\scalar{Mwq}{a}=\scalar{wq}{Ma}\in \Z$ and thus
$$\Psi_{M\wh{w}^{\aff}\left(\frac{b}{M}\right)}^{\wt{\si},\si}(\x)= \sum_{w \in W^{\si}} {\wt{\si}(w)\e^{2\pi \im \scalar{ww'b+Mwq}{\x}}}=\sum_{w \in W^{\si}} {\wt{\si}(w)\e^{2\pi \im \scalar{ww'b}{\x}}}.$$
Rearranging the terms yields
$$\sum_{w \in W^{\si}} {\wt{\si}(w)\e^{2\pi \im \scalar{ww'b}{\x}}}=\wt{\si}(w')\cdot\sum_{w \in W^{\si}} {\wt{\si}(w)\e^{2\pi \im \scalar{wb}{\x}}}=\wt{\si}\circ\wh{\psi}(\wh{w}^\aff)\cdot\Psi_b^{\wt{\si},\si}(\x).$$
If $b'\in MH^{\vee\wt{\si},\si},$ then by \eqref{Hsigmasigmadual} there exists an element $\wh{w}^{\aff} \in \Stab_{\wh{W}_{\si}^{\aff}}(b'/M)$ such that $\wt{\si}\circ\wh{\psi}(\wh{w}^{\aff})= -1.$ 
Therefore it follows that
$$\Psi_{b'}^{\wt{\si},\si}(a)=\Psi_{M\wh{w}^{\aff}\left(\frac{b'}{M}\right)}^{\wt{\si},\si}(a)=\wt{\si}\circ\wh{\psi}(\wh{w}^{\aff})\cdot\Psi_{b'}^{\wt{\si},\si}(a)= -\Psi_{b'}^{\wt{\si},\si}(a).$$
\end{proof}

\subsection{Hartley orbit functions}\setcounter{prop}{0}\

An important real-valued modification of orbit functions which for $a\in \R$ uses the Hartley kernel $$\cas (\x)= \cos(a)+\sin(a)$$ instead of exponential in \eqref{genf} is introduced. 
Fixing an even subgroup $W^\si \subset W$, an additional sign homomorphism $\wt\si\in \Sigma$ and a parameter $b \in \R^n$, the Hartley orbit functions $\Hart_b^{\wt{\si},{\si}}: \R^n\map \R$ are defined via relation
\begin{equation}\label{Hartdef}
\Hart_b^{\wt{\si},\si}(a)= \sum_{w \in W^{\si}} {\wt{\si}(w)\, \cas (2\pi  \scalar{wb}{\x})}.
\end{equation}
Similarly to \eqref{genf}, such definition leads to three types of real-valued orbit functions for root systems with one root-length and to ten types of orbit functions for root systems with two root-lengths. Note that the relation of exponential function to the $\cas$ function 
implies
\begin{equation}\label{HartS}
\Hart_b^{\wt{\si},\si} = \Rere \Psi_b^{\wt{\si},\si}+\Imim \Psi_b^{\wt{\si},\si}.
\end{equation}
This property immediately allows to replicate the argument-label symmetries formulated in Propositions \ref{inv} and \ref{labelthm} as follows.
\begin{tvr}\label{invH}
 Let $b \in P,$ then for any $w^{\aff} \in W_{\si}^{\aff}$ and any $\x \in \R^n$ it holds that
\begin{equation}\label{xinvariantH}
\Hart_b^{\wt{\si},\si}(w^{\aff}\x)=\wt{\si}\circ\psi(w^{\aff})\cdot\Hart_b^{\wt{\si},\si}(\x).
\end{equation}
Additionally, the values of all Hartley orbit functions $\Hart_b^{\wt{\si},\si}$ are zero for $\x'\in H^{\wt{\si},\si}$, 
\begin{equation}\label{zeropH}
\Hart_b^{\wt{\si},\si}(\x ')=0,\ \x ' \in H^{\wt{\si},\si}. 
\end{equation}
\end{tvr}
\begin{tvr}\label{labelthmH}
Let $\x \in \frac{1}{M}P^{\vee},$ then for any $\wh{w}^{\aff}\in \wh{W}_{\si}^{\aff}$ and any $b \in \R^n$ it holds that
$$\Hart_{M\wh{w}^{\aff}\left(\frac{b}{M}\right)}^{\wt{\si},\si}(\x)= \wt{\si}\circ\wh{\psi}(\wh{w}^{\aff})\cdot\Hart_{b}^{\wt{\si},\si}(\x).$$
Additionally,  the Hartley orbit function $\Hart_{b'}^{\wt{\si},\si}$ vanishes for any $b ' \in MH^{\vee\wt{\si},\si}$,
$$\Hart_{b'}^{\wt{\si},\si}(a)=0,\ b' \in MH^{\vee\wt{\si},\si}. $$
\end{tvr}
\begin{example}\label{ex0}
Consider the Lie algebra $G_2$ and the corresponding Hartley orbit functions. Since the standard orbit functions $\Psi^{\one,\one}$, $\Psi^{\si^e ,\one}$ and $\Psi^{\one, \si^e }$ of $G_2$ are real-valued \cite{KP1,KP2,KP3} and $\Psi^{\si^s,\one,}$, $\Psi^{\si^l,\one}$ and $\Psi^{\si^s, \si^e }$ are purely imaginary \cite{Ef2d,SsSlcub}, it is sufficient to detail the four remaining functions  $\Hart^{\one,\si^s}$, $\Hart^{\one,\si^l}$ and $\Hart^{\si^e,\si^s}$, $\Hart^{\si^e,\si^l}$. Explicit expressions of these four types of functions are directly derived from explicit forms in \cite{Ef2d}. For instance, for a point with coordinates in $\alpha^\vee$-basis $(x,y)= x\al_1^\vee+y\al_2^\vee$ and a weight in $\om$-basis $(a,b)=a\om_1+b\om_2$ it holds that 
\begin{align*}
\Hart^{\si^e,\si^s}_{(a,b)}(x,y)=& \cas(2\pi (ax+by))- \cas(2\pi (-ax+(3a+b)y))-  \cas(2\pi ((2a+b)x-(3a+2b)y))\\ &+  \cas(2\pi ((a+b)x-(3a+2b)y))+  \cas(2\pi (-(2a+b)x+(3a+b)y))\\ &-  \cas(2\pi (-(a+b)x+by)).
\end{align*}
Explicit expressions of the other three types can be obtained similarly. The contour plots of the lowest functions $\Hart^{\one,\si^s}$, $\Hart^{\one,\si^l}$ and $\Hart^{\si^e,\si^s}$, $\Hart^{\si^e,\si^l}$ are depicted in Figures \ref{HEsG2}, \ref{HElG2},  \ref{HEsmG2} and \ref{HElmG2}.
\begin{figure}[t]
\resizebox{3.0cm}{!}{\input{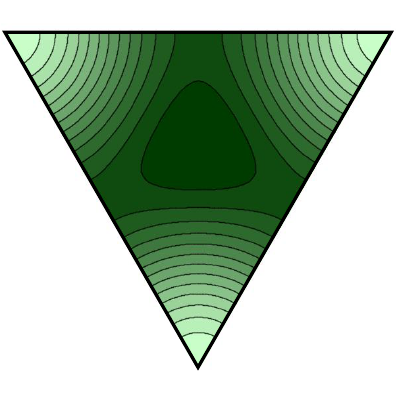_t}}\hspace{6mm}\resizebox{3.0cm}{!}{\input{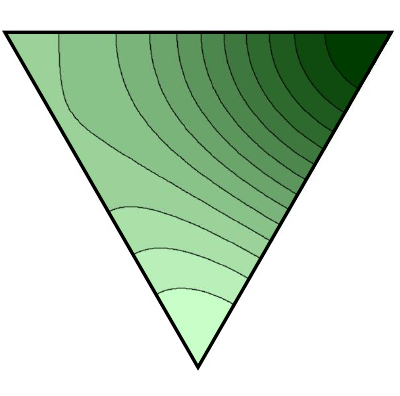_t}}\hspace{6mm}\resizebox{3.0cm}{!}{\input{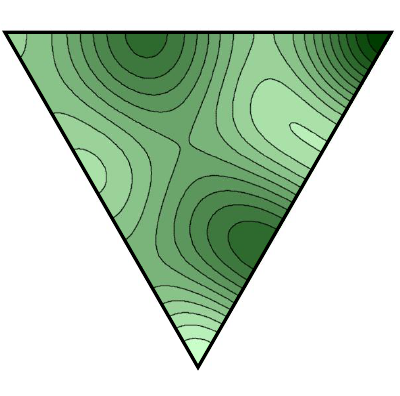_t}}\hspace{6mm}\resizebox{3.0cm}{!}{\input{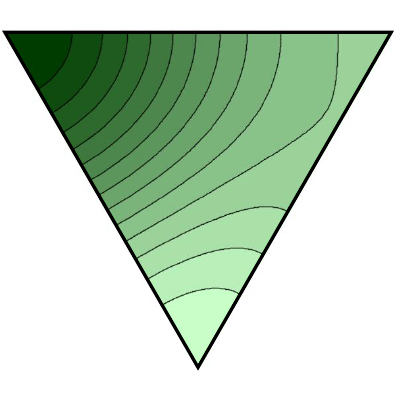_t}}\\ \vspace{3mm}
\resizebox{3.0cm}{!}{\input{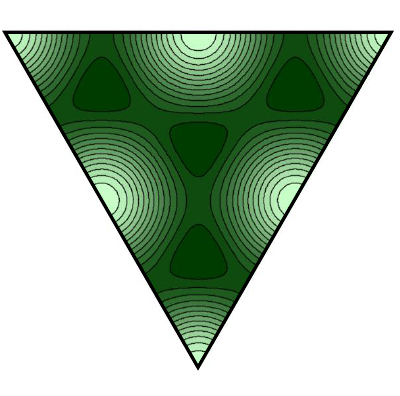_t}}\hspace{6mm}\resizebox{3.0cm}{!}{\input{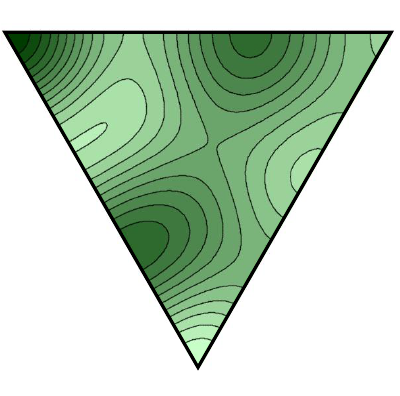_t}}\hspace{6mm}\resizebox{3.0cm}{!}{\input{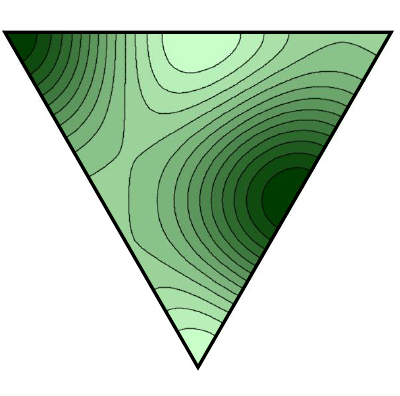_t}}\hspace{6mm}\resizebox{3.0cm}{!}{\input{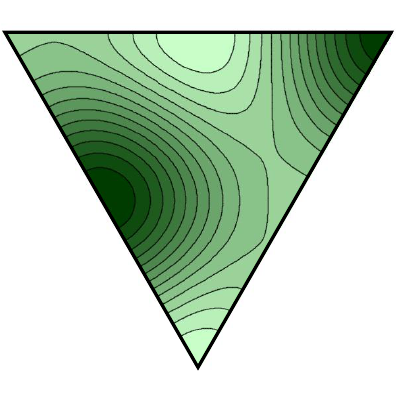_t}}
\caption{\small The contour plots of Hartley orbit functions $\Hart^{\one,\si^s}$  over the fundamental domain $F^{\one,\si^s}$ of $G_2$.}\label{HEsG2}
\end{figure}
\begin{figure}[t]
\resizebox{3.0cm}{!}{\input{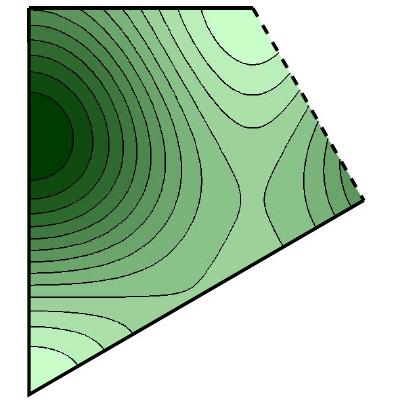_t}}\hspace{6mm}\resizebox{3.0cm}{!}{\input{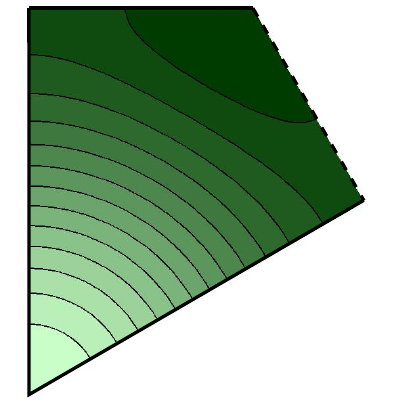_t}}\hspace{6mm}\resizebox{3.0cm}{!}{\input{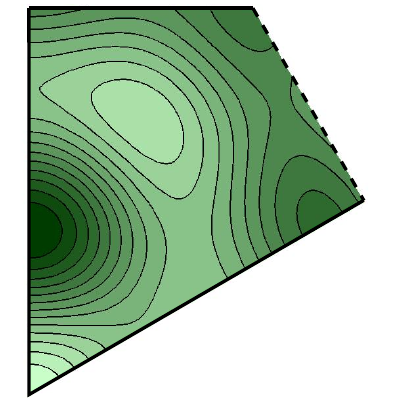_t}}\hspace{6mm}\resizebox{3.0cm}{!}{\input{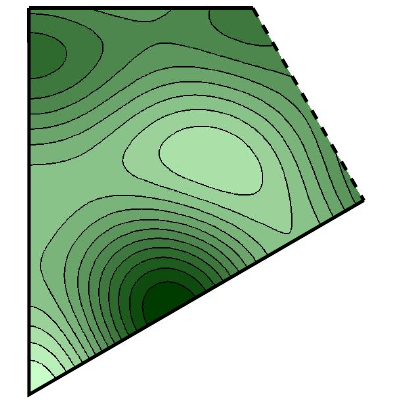_t}}\\ \vspace{3mm}
\resizebox{3.0cm}{!}{\input{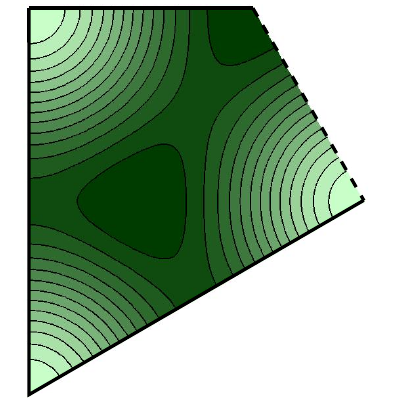_t}}\hspace{6mm}\resizebox{3.0cm}{!}{\input{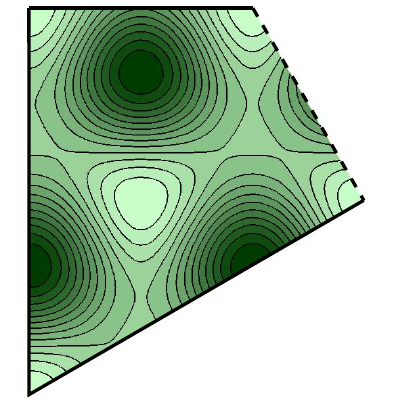_t}}\hspace{6mm}\resizebox{3.0cm}{!}{\input{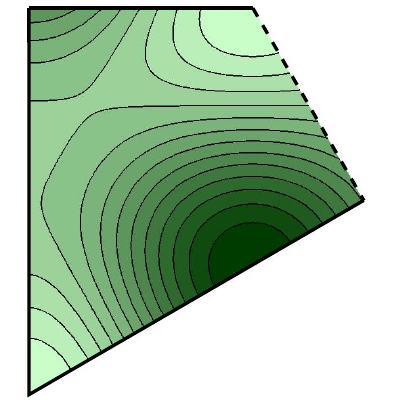_t}}\hspace{6mm}\resizebox{3.0cm}{!}{\input{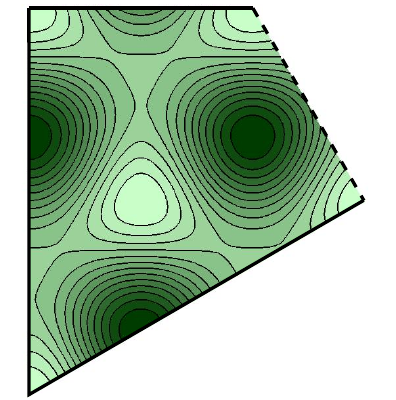_t}}
\caption{\small The contour plots of Hartley orbit functions $\Hart^{\one,\si^l}$ over the fundamental domain $F^{\one,\si^l}$ of $G_2$. The dashed boundary does not belong to the fundamental domain.}\label{HElG2}
\end{figure}
\begin{figure}[t]
\resizebox{3.0cm}{!}{\input{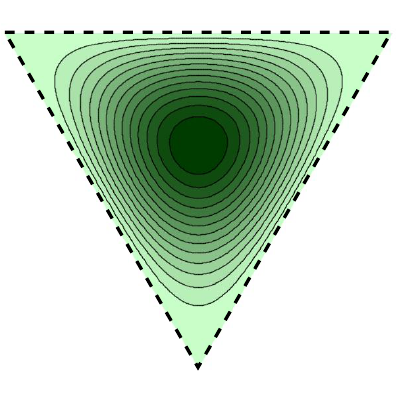_t}}\hspace{6mm}\resizebox{3.0cm}{!}{\input{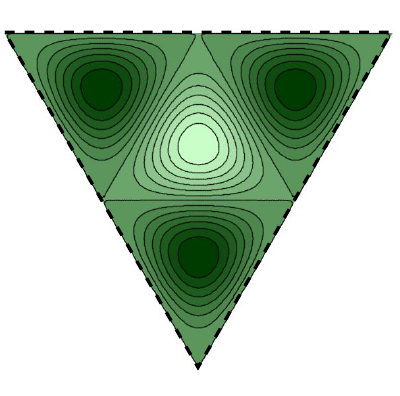_t}}\hspace{6mm}\resizebox{3.0cm}{!}{\input{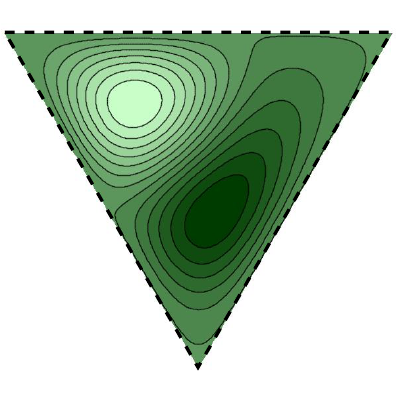_t}}\hspace{6mm}\resizebox{3.0cm}{!}{\input{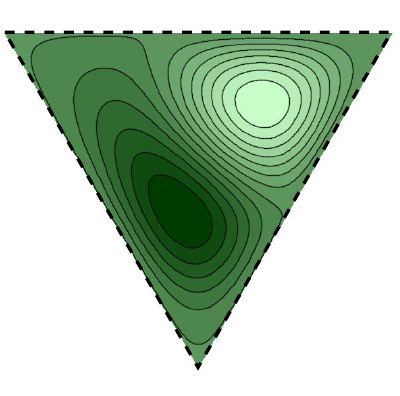_t}}\\ \vspace{3mm}
\resizebox{3.0cm}{!}{\input{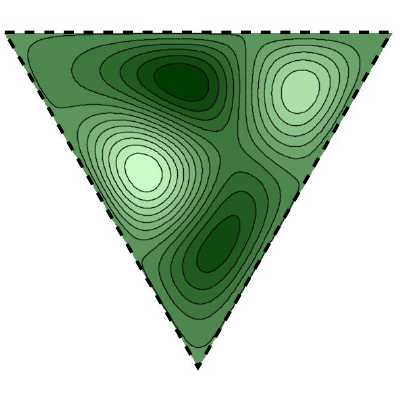_t}}\hspace{6mm}\resizebox{3.0cm}{!}{\input{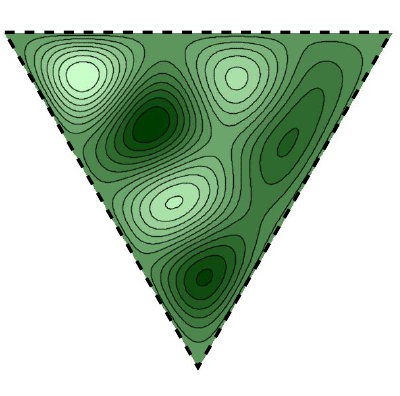_t}}\hspace{6mm}\resizebox{3.0cm}{!}{\input{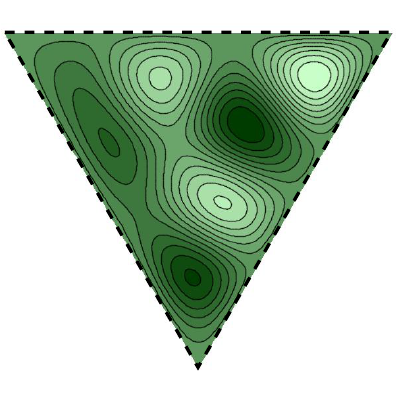_t}}\hspace{6mm}\resizebox{3.0cm}{!}{\input{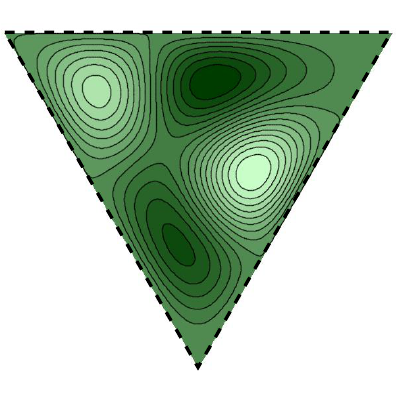_t}}
\caption{\small The contour plots of Hartley orbit functions $\Hart^{\si^e,\si^s}$ over the fundamental domain $F^{\si^e,\si^s}$ of $G_2$. The dashed boundary does not belong to the fundamental domain.}\label{HEsmG2}
\end{figure}

\begin{figure}[t]
\resizebox{3.0cm}{!}{\input{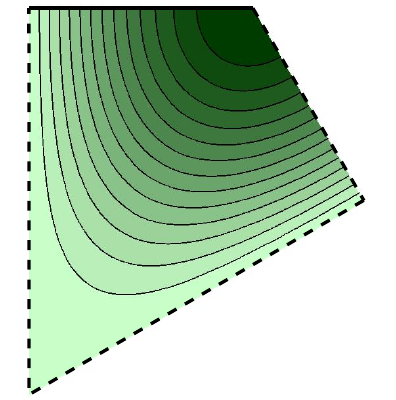_t}}\hspace{6mm}\resizebox{3.0cm}{!}{\input{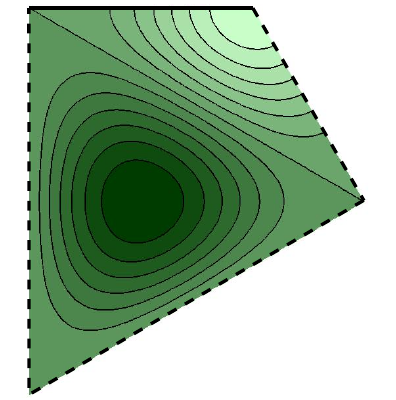_t}}\hspace{6mm}\resizebox{3.0cm}{!}{\input{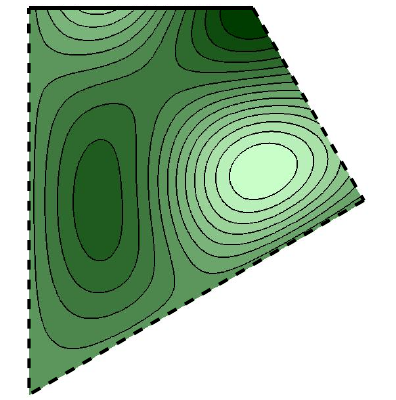_t}}\hspace{6mm}\resizebox{3.0cm}{!}{\input{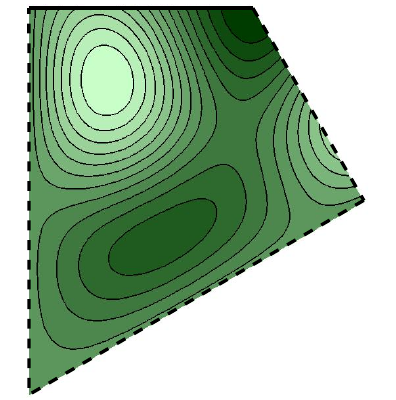_t}}\\ \vspace{3mm}
\resizebox{3.0cm}{!}{\input{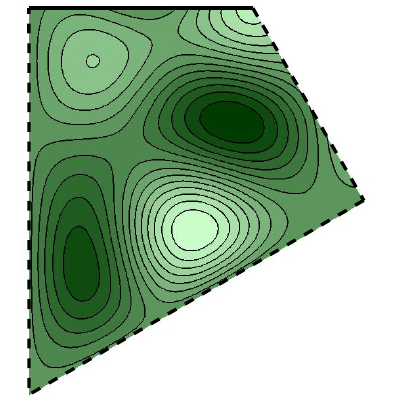_t}}\hspace{6mm}\resizebox{3.0cm}{!}{\input{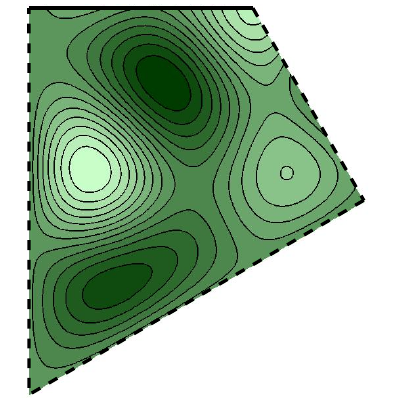_t}}\hspace{6mm}\resizebox{3.0cm}{!}{\input{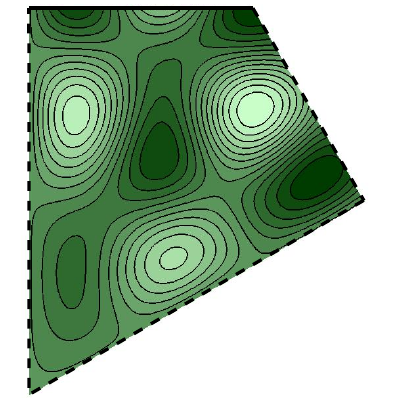_t}}\hspace{6mm}\resizebox{3.0cm}{!}{\input{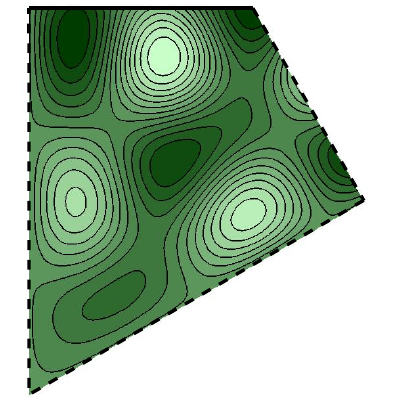_t}}
\caption{\small The contour plots of Hartley orbit functions $\Hart^{\si^e,\si^l}$ over the fundamental domain $F^{\si^e,\si^l}$ of $G_2$. The dashed boundary does not belong to the fundamental domain.}\label{HElmG2}
\end{figure}

\end{example}
\section{Discretization of orbit functions}\setcounter{prop}{0}

\subsection{The point sets $F_M^{\wt{\si},\si}$ and $\Lambda_M^{\wt{\si},\si}$}\setcounter{prop}{0}\

Following the standard choice in Fourier analysis, only discrete values of labels of orbit functions $b\in P$ are considered. According to Propositions \ref{inv} and \ref{invH}, such discretization of labels induces the  $\wt\si-$antisymmetry of the arguments with respect to the affine Weyl subgroups $W_{\si}^{\aff}$ of both types of orbit functions $\Psi^{\wt{\si},\si}$ and $\Hart^{\wt{\si},\si}$. This allows to restrict the domain of orbit functions to the fundamental domain $F^{\one,\si}$  of  $W_{\si}^{\aff}$. Common zero points $H^{\wt{\si},\si}$ of the functions $\Psi^{\wt{\si},\si}$, $\Hart^{\wt{\si},\si}$ further restrict their domain to the sets  $F^{\wt\si,\si}$. Thus for any resolution factor $M\in \N$, the discrete Fourier calculus of orbit functions is developed on the set of points $F^{\wt{\si},\si}_M$,
\begin{equation*}
F^{\wt{\si},\si}_M=\frac{1}{M}P^{\vee}\cap F^{\wt{\si},\si}.
\end{equation*} 
The $W^\si-$invariance of the dual weight lattice $P^{\vee}$ and torus relations \eqref{rfun1E}, \eqref{rfun2E} imply the following crucial property of the discretized fundamental domains $F^{\one,\si}_M$,  
\begin{equation}\label{crucWF}
W^\si(F^{\one,\si}_M / Q^\vee )=\frac{1}{M}P^{\vee}/ Q^\vee.
\end{equation}   
 
The discretization of arguments of orbit functions induces in turn the label  $\wt{\si}-$antisymmetry from Propositions \ref{labelthm} and \ref{labelthmH}. This allows to restrict the labels of orbit functions to the magnified fundamental domain $MF^{\vee\one,\si}$ of the dual affine Weyl subgroups $\wh W_{\si}^{\aff}$. Zero orbit functions labelled by the the boundary points $MH^{\vee\wt{\si},\si}$ of $MF^{\vee\one,\si}$ are excluded yielding the sets of labels  
\begin{equation}\label{FMc}
\Lambda^{\wt{\si},\si}_M= P \cap MF^{\vee\wt{\si},\si}.
 \end{equation}
The discrete sets $F^{\wt{\si},\si}_M$ and  $\Lambda^{\wt{\si},\si}_M$ are explicitly constructed by utilizing structural results of  $F^{\wt{\si},\si}$ and $F^{\vee\wt{\si},\si}$ in Theorems \ref{thm123} and \ref{thm123d}. To this goal, simpler sets of points $F_M^{\si}$ and $\Lambda_M^{\si}$  from \cite{discliegrI,discliegrII} are recalled,
\begin{align*}
F_M^{\si} &=\frac{1}{M}P^{\vee}\cap F^{\si}, \\
\Lambda_M^{\si}&=P\cap MF\dual.
\end{align*}
Discretizing the relations \eqref{HF2} and \eqref{HFd2}, the sets $F^{\wt{\si},\si}_M$ and  $\Lambda^{\wt{\si},\si}_M$ are expressed as
\begin{align}
F_M^{\wt\si,\si}&= (F_M^{\wt\si}\cup  F_M^{\wt\si \cdot \si} ) \cup r_\si (F_M^{\wt\si}\cap  F_M^{\wt\si \cdot \si} ),\label{FMss} \\ 
\Lambda_M^{\wt\si,\si}&= (\Lambda_M^{\wt\si}\cup  \Lambda_M^{\wt\si \cdot \si} ) \cup r^{\vee}_\si (\Lambda_M^{\wt\si}\cap  \Lambda_M^{\wt\si \cdot \si} ).\label{LMss}
\end{align}
These structural relations are crucial for calculating the number of points in $F^{\wt{\si},\si}_M$ and  $\Lambda^{\wt{\si},\si}_M$.
\begin{tvr}\label{Fnumberofpoints}
Consider $\si\neq\one.$ For the numbers of points in $F_M^{\wt{\si},\si}$ and  $\Lambda_M^{\wt{\si},\si}$ it holds that
\begin{align}\label{numberofpoints}
\abs{F_M^{\wt{\si},\si}}&= \abs{F_M^{\wt{\si}}} +\abs{F_M^{\wt{\si}\cdot\si}},\\
\abs{\Lambda_M^{\wt{\si},\si}}&= \abs{\Lambda_M^{\wt{\si}}} +\abs{\Lambda_M^{\wt{\si}\cdot\si}}.\label{numberofpointsL}
\end{align}
\end{tvr}
\begin{proof}
Since the sets  $F_M^{\wt\si}\cup  F_M^{\wt\si \cdot \si}$  and $r_\si (F_M^{\wt\si}\cap  F_M^{\wt\si \cdot \si} )$ in \eqref{FMss} are disjoint it holds that 
\begin{equation}\label{disadd}
\abs{F_M^{\wt\si,\si}} = \abs {F_M^{\wt\si}\cup  F_M^{\wt\si \cdot \si}} +\abs{ r_\si (F_M^{\wt\si}\cap  F_M^{\wt\si \cdot \si} )}.
\end{equation}
For $\si\neq\one$ there exists some reflection $r_\si \in R^\si$ realizing a bijection and thus 
\begin{equation}\label{rR}
\abs{ r_\si (F_M^{\wt\si}\cap  F_M^{\wt\si \cdot \si} )} = \abs{F_M^{\wt\si}\cap  F_M^{\wt\si \cdot \si}}.
\end{equation}
Substituting the inclusion--exclusion principle relation
\begin{equation}
\abs {F_M^{\wt\si}\cup  F_M^{\wt\si \cdot \si}}=  \abs{F_M^{\wt{\si}}} +\abs{F_M^{\wt{\si}\cdot\si}} - \abs{F_M^{\wt\si}\cap  F_M^{\wt\si \cdot \si}}
\end{equation}
together with \eqref{rR} into \eqref{disadd} yields equation \eqref{numberofpoints}. Analogous steps lead from structure relation \eqref{LMss} to equation \eqref{numberofpointsL}.
\end{proof}
Formulas \eqref{numberofpoints} and \eqref{numberofpointsL} allow to derive
the following theorem which is crucial for completness relations and discrete transforms of orbit functions.
\begin{thm}\label{numelements} For any $\wt\si,\si\in \Sigma$ and $M\in \N$ it holds, for the numbers of elements of the sets  $F_M^{\wt{\si},\si}$ and $\Lambda_M^{\wt{\si},\si},$ that
\begin{equation}\label{comp}
\abs{ F_M^{\wt{\si},\si} }=\abs{\Lambda_M^{\wt{\si},\si} }.
\end{equation}
\end{thm}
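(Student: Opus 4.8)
The plan is to reduce Theorem~\ref{numelements} to the four ``base'' equalities $\abs{F_M^{\tau}}=\abs{\Lambda_M^{\tau}}$ for the single sign homomorphisms $\tau\in\Sigma$. These are exactly the completeness relations underpinning the already established discrete transforms of the $C$- and $S$-functions in \cite{discliegrI} (cases $\tau=\one$ and $\tau=\si^e$) and of the $S^s$- and $S^l$-functions in \cite{discliegrII} (cases $\tau=\si^s$ and $\tau=\si^l$), so I would take them as known inputs.

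First I would deal with the case $\si=\one$, which is deliberately excluded from Proposition~\ref{Fnumberofpoints}. Here $W_{\one}^{\aff}=W^{\aff}$ and $\wh W_{\one}^{\aff}=\wh W^{\aff}$, so definition \eqref{Hsigmasigma} collapses, via \eqref{Hsigma}, to $H^{\wt\si,\one}=H^{\wt\si}$, and then \eqref{Fsisi} together with \eqref{Fsigma2} gives $F^{\wt\si,\one}=F\setminus H^{\wt\si}=F^{\wt\si}$; dually, \eqref{Hsigmasigmadual} and \eqref{dualweightdomain} give $F^{\vee\wt\si,\one}=F^{\vee\wt\si}$. Intersecting with $\frac1M P^{\vee}$, respectively forming $P\cap M(\cdot)$, then yields $F_M^{\wt\si,\one}=F_M^{\wt\si}$ and $\Lambda_M^{\wt\si,\one}=\Lambda_M^{\wt\si}$, so that \eqref{comp} for $\si=\one$ is literally the base equality with $\tau=\wt\si$.

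For $\si\neq\one$ I would simply invoke Proposition~\ref{Fnumberofpoints}: relations \eqref{numberofpoints} and \eqref{numberofpointsL} give $\abs{F_M^{\wt\si,\si}}=\abs{F_M^{\wt\si}}+\abs{F_M^{\wt\si\cdot\si}}$ and $\abs{\Lambda_M^{\wt\si,\si}}=\abs{\Lambda_M^{\wt\si}}+\abs{\Lambda_M^{\wt\si\cdot\si}}$. Since both $\wt\si$ and $\wt\si\cdot\si$ lie in $\Sigma$, applying the base equality to each of these two sign homomorphisms equalizes the two right-hand sides, which is \eqref{comp}. Together with the previous paragraph this covers all $\wt\si,\si\in\Sigma$ and finishes the proof.

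The only point carrying real content is the base case, and that is inherited from the earlier papers rather than proved here. If a self-contained derivation were desired, the route would be: multiplication by $M$ identifies $\frac1M P^{\vee}/Q^{\vee}$ with $P^{\vee}/MQ^{\vee}$, and both this group and $P/MQ$ have order $cM^{n}$; one then counts the $W^{\tau}$-orbits on the argument torus $\R^n/Q^{\vee}$ and on the label torus $\R^n/MQ$ by the orbit--stabilizer theorem, using Propositions~\ref{torus} and \ref{tvrfunddual} to transport stabilizers between the tori and the affine groups, and using the scalar product $\scalar{\,}{\,}$ to match the two families of orbits, with the boundary sets $H^{\tau}$ and $H^{\vee\tau}$ (excised in the definitions of $F_M^{\tau}$ and $\Lambda_M^{\tau}$) carving out corresponding orbit types. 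I would expect this orbit-matching step, rather than the bookkeeping reductions above, to be where the genuine difficulty sits.
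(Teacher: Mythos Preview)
Your proof is correct and follows essentially the same route as the paper's: reduce to the base equalities $\abs{F_M^{\tau}}=\abs{\Lambda_M^{\tau}}$ from \cite{discliegrI,discliegrII}, handle $\si=\one$ by observing that $F_M^{\wt\si,\one}=F_M^{\wt\si}$ and $\Lambda_M^{\wt\si,\one}=\Lambda_M^{\wt\si}$, and for $\si\neq\one$ apply Proposition~\ref{Fnumberofpoints} and the base equalities to $\wt\si$ and $\wt\si\cdot\si$. Your justification of the $\si=\one$ reduction is in fact more explicit than the paper's, and your closing paragraph on a self-contained derivation of the base case is extra commentary beyond what the paper provides.
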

\begin{proof}
Summarizing relations (30) and (36) in \cite{discliegrI} together with Corollary 5.4 in  \cite{discliegrII} yield, for any $\si \in \Sigma,$ the relation 
\begin{equation}\label{FMsLMs}
|F_M^\si|=|\Lambda_M^\si|.
\end{equation}
Since for $\si=\one$ the sets $\Lambda_M^{\wt{\si},\si}$ and $F_M^{\wt{\si},\si}$ in fact reduce to $F^{\wt\si}_M$ and $\Lambda^{\wt\si}_M$
$$F^{\wt\si,\one}_M= F^{\wt\si}_M,\q \Lambda^{\wt\si,\one}_M=\Lambda^{\wt\si}_M,$$
equation \eqref{comp} specializes in this case to \eqref{FMsLMs}. Substituting for  $\si\neq\one$ relation \eqref{FMsLMs} into
\eqref{numberofpoints} and \eqref{numberofpointsL} yields formula \eqref{comp}.
\end{proof}


\subsection{Construction of the sets $F_M^{\wt{\si},\si}$ and $\Lambda_M^{\wt{\si},\si}$}\setcounter{prop}{0}\

An algorithm for the construction of the sets $F_M^{\wt{\si},\si}$ and $\Lambda_M^{\wt{\si},\si}$ relies on the structural relations \eqref{FMss} and~\eqref{LMss}. 
Note that according to \cite{discliegrI,discliegrII}, the sets $F_M^{\si}$ and $\Lambda_M^{\si}$ are described by defining the symbols $u_i^\si \in \R$, $i=0,\dots,n$ as 
\begin{alignat*}{2}
&u_i^\si \in \N, \quad &r_i &\in R^\si, \\
&u_i^\si \in \Z^{\geq 0}, \quad &r_i &\in R\setminus R^\si
\end{alignat*}
and their dual versions $t_i^{\si\vee} \in \R\dual,\, i=0,\cdots,n$ as
\begin{alignat*}{2}
&t_i^{\vee\si} \in \N, \quad &r_i \in R^{\vee\si}, \\
&t_i^{\vee\si} \in \Z^{\geq 0}, \quad &r_i \in R^{\vee} \setminus R^{\vee\si}.
\end{alignat*}
It holds that $R^\one = R^{\vee\one} =\emptyset$, $R^{\si^e} = R$ and lists of generators from $R^{\si^s}, R^{\si^l}$ and $R^{\vee{\si^s}}, R^{\vee{\si^l}}$ are contained in Table 1 in \cite{discliegrII}.  

The explicit form of $F_M^\si$ and $\Lambda_M^\si$ is then given by
\begin{alignat}{1}\label{definingrelation2}
F_M^\si &=\set {\frac{u_1^\si}{M}\wdual{1}+\cdots+\frac{u_n^\si}{M}\wdual{n}}{u_0^\si  + u_1^\si m_1 +\cdots + u_n^\si m_n = M},\\
\Lambda_M^{\si} &=\set {t_1^{\vee\si}\wvector{1}+\cdots+t_n^{\vee\si}{\wvector{n}}}
{t_0^{\vee\si}  + t_1^{\vee\si} m^\vee_1 +\cdots + t_n^{\vee\si} m^\vee_n = M}.\label{definingrelation2L}
\end{alignat}
Since $R^\one = \emptyset$, the four cases of the sets $F^{\si,\one}_M$ and $\Lambda^{\si,\one}_M$ in fact specialize to the sets $F^{\si}_M$ and $\Lambda^{\si}_M$,
 $$F^{\si,\one}_M= F^{\si}_M,\q \Lambda^{\si,\one}_M=\Lambda^{\si}_M.$$
For $\si \neq \one,$ there are three cases of $F^{\one,\si}_M$ and $\Lambda^{\one,\si}_M$ 
expressed as a union of two disjoint grid fragments: the point set $F^\one_M$ and the point set $r_{\si}  F_M^{\si}$ together with their dual counterparts,
\begin{equation}\label{Fonesi}
F^{\one,\si}_M=F^\one_M \cup r_{\si}  F_M^{\si}, \q \Lambda^{\one,\si}_M=\Lambda ^\one_M \cup r^\vee_{\si}  \Lambda_M^{\si}. 
\end{equation}
The next two choices of combinations of sign homomorphisms yield according to \eqref{FMss} and~\eqref{LMss} the relations
\begin{alignat}{1}
F_M^{\si^e, \si^s} &= F_M^{\si^l} \cup r_{\si^s} F_M^{\si^e}, \q \Lambda_M^{\si^e, \si^s} = \Lambda_M^{\si^l} \cup r^\vee_{\si^s} \Lambda_M^{\si^e} \label{FMes}\\
F_M^{\si^e, \si^l}& =F_M^{\si^s} \cup r_{\si^l} F_M^{\si^e}, \q \Lambda_M^{\si^e, \si^l} =\Lambda_M^{\si^s} \cup r^\vee_{\si^l} \Lambda_M^{\si^e}, \label{FMel}
\end{alignat}
and the last case is of the form
\begin{alignat}{1}
F_M^{\si^l, \si^e}& = (F_M^{\si^l}\cup F_M^{\si^s})\cup r_{\si^e}F_M^{\si^e}, \label{Fle} \\
\Lambda_M^{\si^l, \si^e}& = ( \Lambda_M^{\si^l}\cup \Lambda_M^{\si^s})\cup r^\vee_{\si^e}\Lambda_M^{\si^e}. \label{Lle}
\end{alignat}


\normalsize
\begin{example}\label{ex1}
Consider the Lie algebra $C_2$ and the scaling factor $M=4$. 
Firstly, using counting formulas from Appendix, the numbers of elements of $F_M^{\wt{\si},\si}$, $\si\neq\one$ are calculated as
\begin{alignat*}{2}
\abs{F^{\one,\si^s}_4}&=\comb{4}{2}+\comb{3}{2}+\comb{3}{2}+\comb{2}{2}= 13,\\  
\abs{F^{\one,\si^e}_4}&= \comb{4}{2}+\comb{3}{2}+\comb{2}{2}+\comb{1}{2}= 10, \\
\abs{F^{\one,\si^l}_4}&=\comb{4}{2}+\comb{3}{2}+\comb{3}{2}+\comb{2}{2} = 13, \\
\abs{F^{\si^e,\si^s}_4}&=\comb{3}{2}+\comb{2}{2}+\comb{2}{2}+\comb{1}{2}= 5, \\
\abs{F^{\si^l,\si^e}_4}&=\comb{3}{2}+\comb{2}{2}+\comb{3}{2}+\comb{2}{2}=8, \\
\abs{F^{\si^e,\si^l}_4}&= \comb{3}{2}+2\comb{2}{2}+\comb{1}{2}=5. 
\end{alignat*}

The nine points of the set $F^{\one}_4$, given by \eqref{definingrelation2},  are explicitly listed with coordinates in the orthonormal basis $(\al_1, \al_1 + \al_2)$ shown in Figure \ref{figC2},
\begin{alignat*}{2}
F^{\one}_4=& \left\{ \left[ \begin {array}{c} 0\\ 0\end {array} \right] 
, \left[ \begin {array}{c} 0\\  \frac{1}{4}\end {array}
 \right] , \left[ \begin {array}{c} 0\\  \frac{1}{2}
\end {array} \right] , \left[ \begin {array}{c} 0\\ \frac{3}{4}
\end {array} \right] , \left[ \begin {array}{c} 0
\\ 1\end {array} \right] , \left[ \begin {array}{c} \frac{1}{4}
\\  \frac{1}{4}\end {array} \right] , \left[ \begin {array}
{c}  \frac{1}{4}\\  \frac{1}{2}\end {array} \right] , \left[ 
\begin {array}{c}  \frac{1}{4}\\  \frac{3}{4}\end {array} \right] ,
 \left[ \begin {array}{c}  \frac{1}{2}\\  \frac{1}{2}\end {array}
 \right] \right\} .
\end{alignat*}
The  three sets $F^{\one,\si}_4$ are calculated using relation \eqref{Fonesi},
\begin{alignat*}{2}
F^{\one,\si^s}_4=& F^{\one}_4 \cup \left\{ \left[ \begin {array}{c} - \frac{1}{4}\\  \frac{1}{4}
\end {array} \right] , \left[ \begin {array}{c} - \frac{1}{4}
\\  \frac{1}{2}\end {array} \right] , \left[ \begin {array}
{c} - \frac{1}{4}\\  \frac{3}{4}\end {array} \right] , \left[ 
\begin {array}{c} - \frac{1}{2}\\  \frac{1}{2}\end {array} \right] \right\}, \\
F^{\one,\si^e}_4=& F^{\one}_4 \cup \left\{ \left[ \begin {array}{c} - \frac{1}{4}\\  \frac{1}{2}
\end {array} \right] \right\}, \\
F^{\one,\si^l}_4=& F^{\one}_4\cup \left\{ \left[ \begin {array}{c}  \frac{1}{4}\\ 0
\end {array} \right] , \left[ \begin {array}{c}  \frac{1}{2}
\\ 0\end {array} \right] , \left[ \begin {array}{c} \frac{3}{4}
\\ 0\end {array} \right] , \left[ \begin {array}
{c}  \frac{1}{2}\\  \frac{1}{4}\end {array} \right] \right\}.
\end{alignat*}
The last three cases of the sets $F^{\wt\si,\si}_4$ are calculated using relations \eqref{FMes}, \eqref{FMel} and \eqref{Fle},
\begin{alignat*}{2}
F^{\si^e,\si^s}_4&= \left\{ \left[ \begin {array}{c} 0\\  \frac{1}{4}\end {array}
 \right] , \left[ \begin {array}{c} 0\\  \frac{1}{2}
\end {array} \right] , \left[ \begin {array}{c} 0\\ \frac{3}{4}
\end {array} \right] , \left[ \begin {array}{c}  \frac{1}{4}
\\  \frac{1}{2}\end {array} \right] \right\} \cup \left\{ \left[ \begin {array}
{c} - \frac{1}{4}\\  \frac{1}{2}\end {array} \right] \right\},\\
F^{\si^l,\si^e}_4&= \left\{ \left[ \begin {array}{c} 0\\  \frac{1}{4}\end {array}
 \right] , \left[ \begin {array}{c} 0\\  \frac{1}{2}
\end {array} \right] , \left[ \begin {array}{c} 0\\ \frac{3}{4}
\end {array} \right] , \left[ \begin {array}{c}  \frac{1}{4}
\\  \frac{1}{4}\end {array} \right] , \left[ \begin {array}
{c}  \frac{1}{4}\\  \frac{1}{2}\end {array} \right] , \left[ 
\begin {array}{c}  \frac{1}{4}\\  \frac{3}{4}\end {array} \right] ,
 \left[ \begin {array}{c}  \frac{1}{2}\\  \frac{1}{2}\end {array}
 \right] \right\} \cup \left\{\left[ \begin {array}{c} - \frac{1}{4}\\  \frac{1}{2}
\end {array} \right] \right\},\\ 	
F^{\si^e,\si^l}_4&= \left\{ \left[ \begin {array}{c}  \frac{1}{4}\\  \frac{1}{4}\end {array}
 \right] , \left[ \begin {array}{c}  \frac{1}{4}\\  \frac{1}{2}
\end {array} \right] , \left[ \begin {array}{c}  \frac{1}{4}
\\  \frac{3}{4}\end {array} \right] , \left[ \begin {array}
{c}  \frac{1}{2}\\  \frac{1}{2}\end {array} \right]\right\} \cup \left\{ \left[ 
\begin {array}{c}  \frac{1}{2}\\  \frac{1}{4}\end {array} \right] \right\}.\\
\end{alignat*}
\normalsize
The coset representants of the quotient group $\frac{1}{4}P^{\vee}/Q^{\vee}$ together with the sets $F^{\wt{\si},\si}_4$ for all cases are depicted in Figure \ref{figC2}.
\begin{figure}
\resizebox{12cm}{!}{\input{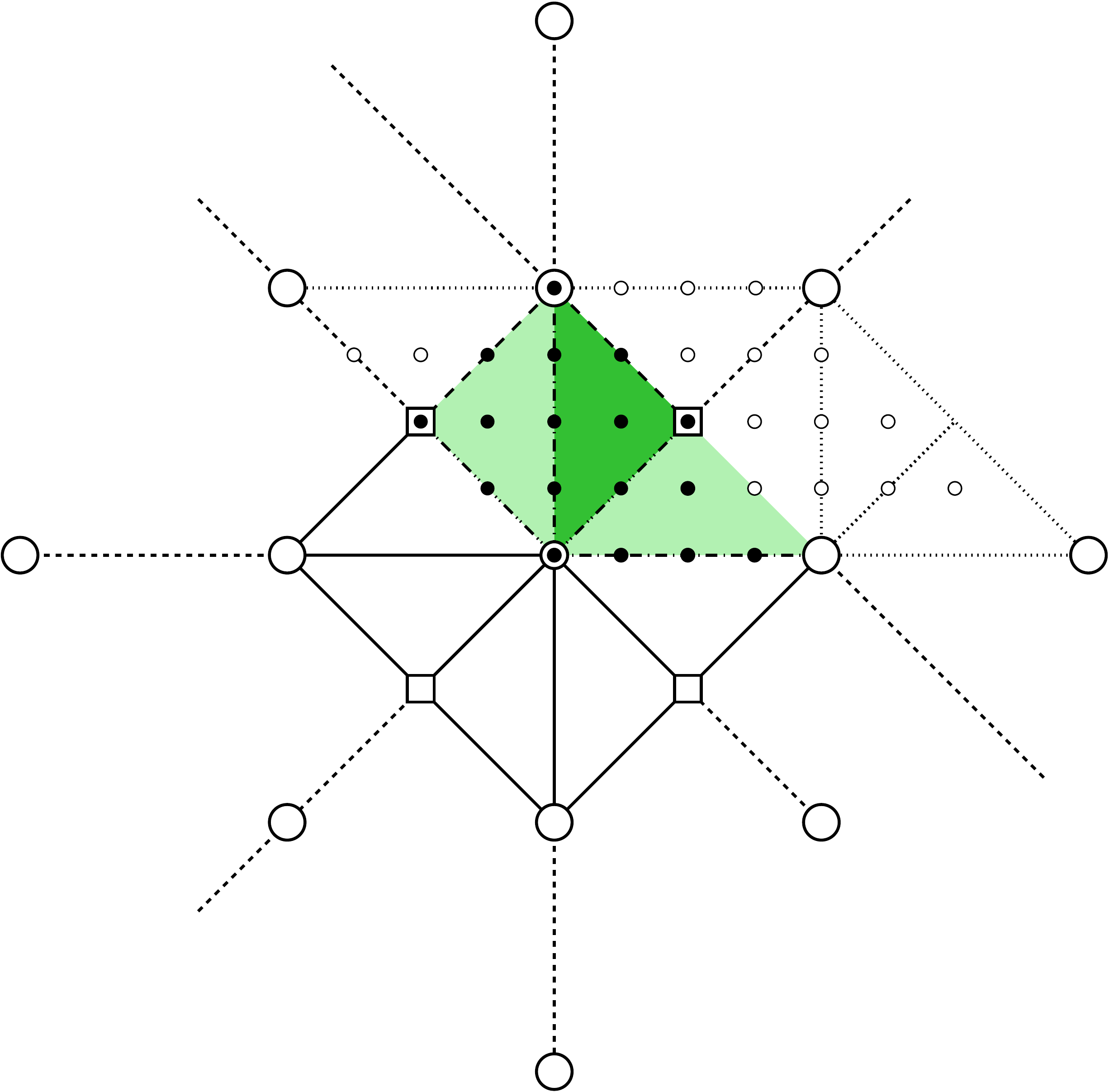_t}}
\caption{\small Discrete sets $F^{\wt{\si},\si}_4$ of $C_2$. The $32$ coset representants of $\frac{1}{4}P^{\vee}/Q^{\vee}$ of $C_2$ are shown as $17$ black-filled and $15$ white-filled  dots. The darker green triangle is the fundamental domain $F$, two lighter triangles represent its reflections $r_1 F$ and $r_2 F$. The union $F\cup r_1 F$ contains $13$ elements of $F^{\one,\sigma^s}_4$; without all of its borders this union contains the $5$ points of $F^{\sigma^e,\sigma^s}_4$. Excluding the borders $r_1b_0$, $r_1 b_2$ and the point $\frac{1}{2}r_1 \om^\vee_1$ of $F\cup r_1 F$, the $10$ points of $F^{\one,\sigma^e}_4$ are obtained; excluding moreover the point $\om^\vee_2 $ and the point $0$, the $8$ points of $F^{\sigma^l,\sigma^e}_4$ are given. The union $F\cup r_2 F$ without the border $r_2b_0$ and the point $r_2\om_2^\vee$ contains $13$ elements of $F^{\one,\sigma^l}_4$. Leaving only the border $b_0$ and the point $\frac{1}{2}\om^\vee_1$ on the borders of $F\cup r_2 F$, the $5$ points in $F^{\sigma^e,\sigma^l}_4$ are obtained.
}\label{figC2}
\end{figure}
\end{example}
\begin{example}\label{exdual}
Consider again the Lie algebra $C_2$ and the scaling factor $M=4$. According to Theorem \ref{numelements} it holds that $\abs{F^{\wt{\si},\si}_4} = \abs{\Lambda^{\wt{\si},\si}_4}$, 
therefore the numbers of elements of of $\Lambda_M^{\wt{\si},\si}$, $\si\neq\one$ are as follows
\begin{alignat*}{2}
\abs{\Lambda^{\one,\si^s}_4}&= 13, \ \  \abs{\Lambda^{\si^e,\si^s}_4}&=5, \\
\abs{\Lambda^{\one,\si^e}_4}&= 10, \ \  \abs{\Lambda^{\si^l,\si^e}_4}&=8, \\
\abs{\Lambda^{\one,\si^l}_4}&= 13, \ \  \abs{\Lambda^{\si^e,\si^l}_4}&=5. \\
\end{alignat*}
The nine points of the set $\Lambda^{\one}_4$, given by \eqref{definingrelation2L},  are explicitly listed with coordinates in the orthonormal basis $(\al_1, \al_1 + \al_2)$ shown in Figure \ref{figC2d},
\begin{alignat*}{2}
\Lambda^{\one}_4=& \left\{ \left[ \begin {array}{c} 0\\ 0\end {array} \right] , \left[ \begin {array}{c} 0\\ 1\end {array} \right] 
, \left[ \begin {array}{c} 0\\ 2\end {array} \right] 
, \left[ \begin {array}{c}  \frac{1}{2}\\  \frac{1}{2}\end {array}
 \right] , \left[ \begin {array}{c}  \frac{1}{2}\\  \frac{3}{2}
\end {array} \right] , \left[ \begin {array}{c} 1\\ 1
\end {array} \right] , \left[ \begin {array}{c} 1\\ 2
\end {array} \right] , \left[ \begin {array}{c}  \frac{3}{2}
\\  \frac{3}{2}\end {array} \right] , \left[ \begin {array}
{c} 2\\ 2\end {array} \right] \right\} .
\end{alignat*}
The  three sets $\Lambda^{\one,\si}_4$ are calculated using relation \eqref{Fonesi},
\begin{alignat*}{2}
 {\Lambda^{\one,\si^s}_4}&= \Lambda^{\one}_4 \cup\left\{  \left[ \begin {array}
{c} - \frac{1}{2}\\  \frac{1}{2}\end {array} \right] , \left[ 
\begin {array}{c} - \frac{1}{2}\\  \frac{3}{2}\end {array} \right] ,
 \left[ \begin {array}{c} -1\\ 1\end {array} \right] 
, \left[ \begin {array}{c} - \frac{3}{2}\\  \frac{3}{2}\end {array}
 \right] \right\}    ,\\  
 {\Lambda^{\one,\si^e}_4}&= \Lambda^{\one}_4 \cup\left\{ \left[ \begin {array}
{c} - \frac{1}{2}\\  \frac{3}{2}\end {array} \right] \right\} , \\
 {\Lambda^{\one,\si^l}_4}&= \Lambda^{\one}_4 \cup\left\{  \left[ \begin {array}
{c} 1\\ 0\end {array} \right] , \left[ \begin {array}
{c} 2\\ 0\end {array} \right] , \left[ \begin {array}
{c}  \frac{3}{2}\\  \frac{1}{2}\end {array} \right] , \left[ 
\begin {array}{c} 2\\ 1\end {array} \right] \right\}
\end{alignat*}
The last three cases of the sets $\Lambda^{\wt\si,\si}_4$ are calculated using relations \eqref{FMes}, \eqref{FMel} and \eqref{Lle},
\begin{alignat*}{2}
 {\Lambda^{\si^e,\si^s}_4}&=\left\{ \left[ \begin {array}{c} 0\\ 1\end {array} \right] 
, \left[ \begin {array}{c} 0\\ 2\end {array} \right] 
, \left[ \begin {array}{c}  \frac{1}{2}\\  \frac{3}{2}\end {array}
 \right] , \left[ \begin {array}{c} 1\\ 2\end {array}
 \right] \right\} \cup \left\{ \left[ \begin {array}{c} - \frac{1}{2}\\  \frac{3}{2}
\end {array} \right] \right\}, \\
 {\Lambda^{\si^l,\si^e}_4}&= \left\{ \left[ \begin {array}{c} 0\\ 1\end {array} \right] 
, \left[ \begin {array}{c} 0\\ 2\end {array} \right] 
, \left[ \begin {array}{c}  \frac{1}{2}\\  \frac{1}{2}\end {array}
 \right] , \left[ \begin {array}{c}  \frac{1}{2}\\  \frac{3}{2}
\end {array} \right] , \left[ \begin {array}{c} 1\\ 1
\end {array} \right] , \left[ \begin {array}{c} 1\\ 2
\end {array} \right] , \left[ \begin {array}{c}  \frac{3}{2}
\\  \frac{3}{2}\end {array} \right] \right\} \cup \left\{ \left[ \begin {array}
{c} - \frac{1}{2}\\  \frac{3}{2}\end {array} \right] \right\} , \\
 {\Lambda^{\si^e,\si^l}_4}&=\left\{ \left[ \begin {array}{c}  \frac{1}{2}\\  \frac{1}{2}\end {array}
 \right] , \left[ \begin {array}{c}  \frac{1}{2}\\  \frac{3}{2}
\end {array} \right] , \left[ \begin {array}{c} 1\\ 1
\end {array} \right] , \left[ \begin {array}{c}  \frac{3}{2}
\\  \frac{3}{2}\end {array} \right] \right\} \cup \left\{ \left[ \begin {array}
{c}  \frac{3}{2}\\  \frac{1}{2}\end {array} \right] \right\}. \\
\end{alignat*}
\normalsize

The coset representants of the quotient group $P/4Q$ together with the sets $\Lambda^{\wt{\si},\si}_4$ for all cases are depicted in Figure~\ref{figC2d}.
\begin{figure}
\resizebox{12cm}{!}{\input{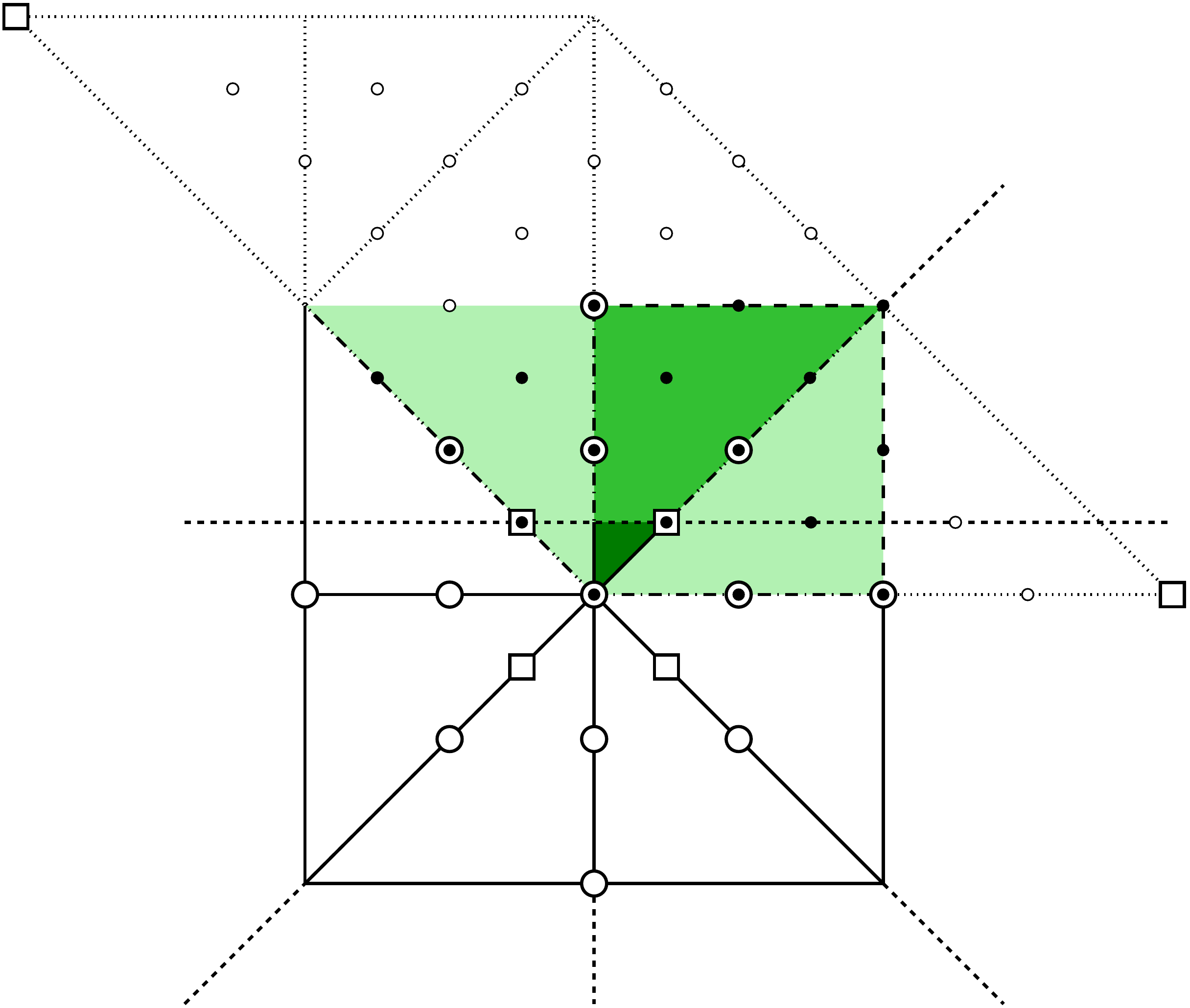_t}}
\caption{\small Discrete sets $\Lambda^{\wt{\si},\si}_4$ of $C_2$. The $32$ coset representants of $P/4Q$ of $C_2$ are shown as $17$ black-filled and $15$ white-filled  dots. The darker green triangle is the magnified fundamental domain $4F^{\vee} $, two lighter triangles represent its reflections $4r_1 F^{\vee} $ and $4r_2 F^{\vee} $. The union $4F^{\vee} \cup 4r_1 F^{\vee} $ without the border $r_1b_0^{\vee} $ and the point $4r_1\om_1$ contains $13$ elements of $\Lambda^{ \one,\sigma^s}_4$. Leaving only the border $b_0^{\vee} $ and the point $2\om_2$ on the borders of $4F^{\vee} \cup 4r_1 F^{\vee} $ the $5$ points of $\Lambda^{ \sigma^e,\sigma^s}_4$ are obtained. Excluding the borders $r_1b_0^{\vee} $, $r_1 b_2^{\vee} $ and point $4r_1 \om_1$ of $4F^{\vee} \cup 4r_1 F^{\vee} $ the $10$ points of $\Lambda^{ \one,\sigma^e}_4$ are obtained; excluding moreover the point $4\om_1 $ and the point $0$, the $8$ points of $\Lambda^{ \sigma^e,\sigma^e}_4$ are given. The union $4F^{\vee} \cup 4r_2 F^{\vee} $ contains $13$ elements of $\Lambda^{ \one,\sigma^l}_4$; without all of its borders this union contains the $5$ points in $\Lambda^{ \sigma^e,\sigma^l}_4$. }\label{figC2d}
\end{figure}\end{example}

\subsection{Discrete orthogonality of orbit functions}\setcounter{prop}{0}\

The discrete orthogonality relations of the discretized functions $\Psi^{\wt{\si},\si}_b$, $b\in \Lambda_M^{\wt{\si},\si}$ on the finite point sets $F_M^{\wt{\si},\si}$ are formulated in this section.    
To this end, the discrete orthogonality relations of standard exponential functions on the discretized torus $\frac{1}{M}P^{\vee}/Q^{\vee}$ are needed \cite{discliegrI,discliegrII,MP2}. These hold for any $b \in P$ and state that
\begin{alignat*}{1}
\sum_{x \in \frac{1}{M}P^{\vee}/Q^{\vee}}{\e^{2\pi\im\scalar{b}{x}}} &= \left\{ 
  \begin{array}{l l}
    cM^n, & \quad \text{if $b \in MQ$}\\
    0\,, & \quad \text{otherwise.}
  \end{array} \right. 
\end{alignat*}
These relations can be equivalently expressed as
\begin{equation}\label{expDOG}
\sum_{x \in \frac{1}{M}P^{\vee}/Q^{\vee}}{\e^{2\pi\im\scalar{b}{x}}} = \sum_{q \in Q}{cM^n \delta_{b,Mq}}.
\end{equation}
The scalar products of any two functions $f,\,g : F_M^{\wt{\si},\si} \map \mathbbm{C}$ are introduced via the coefficients $\ep^{\si}(\x)$ determined by \eqref{epsi}, 
$$\scalar{f}{g}_{F_M^{\wt{\si},\si}}=\sum_{\x \in F_M^{\wt{\si},\si}} {\ep^{\si}(\x)f(\x)\overline{g(\x)}}.$$
The general discrete orthogonality relations of all ten types of orbit functions with respect to these scalar products are summarized in the following theorem. 
\begin{thm}\label{DOGorbitfunctions} 
For any $\si, \wt \si \in \Sigma$ and any $b,b'\in \Lambda_M^{\wt{\si},\si}$ it holds that
\begin{equation}\label{disk}
\scalar{\Psi_{b}^{\wt{\si},\si}}{\Psi_{b'}^{\wt{\si},\si}}_{F_M^{\wt{\si},\si}}= c\abs{W^{\si}}M^n h_M^{\vee\si}(b) \delta_{b,b'},
\end{equation}
where $c$, $h_M^{\vee\si}$ are defined by \eqref{Center} and \eqref{hM}, respectively, and $\abs{W^{\si}}$ is the order of the subgroup $W^{\si}$.  
\end{thm}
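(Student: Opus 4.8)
The plan is to reduce the weighted sum over the fundamental fragment $F_M^{\wt\si,\si}$ to a plain sum over the full discrete torus $\frac1M P^\vee/Q^\vee$ and then apply the exponential discrete orthogonality \eqref{expDOG}. First, since $F^{\wt\si,\si}=(F\cup r_\si F^\si)\setminus H^{\wt\si,\si}=F^{\one,\si}\setminus H^{\wt\si,\si}$ by \eqref{Fsisi} and \eqref{disdec}, and since both factors $\Psi_b^{\wt\si,\si},\Psi_{b'}^{\wt\si,\si}$ vanish on $H^{\wt\si,\si}$ by \eqref{zerop}, the defining sum $\scalar{\Psi_b^{\wt\si,\si}}{\Psi_{b'}^{\wt\si,\si}}_{F_M^{\wt\si,\si}}$ is unchanged if it is extended to $F_M^{\one,\si}$. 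Next, the integrand $g(\x):=\Psi_b^{\wt\si,\si}(\x)\overline{\Psi_{b'}^{\wt\si,\si}(\x)}$ descends to a well-defined $W^\si$-invariant function on $\frac1M P^\vee/Q^\vee$: it is $Q^\vee$-periodic because $b,b'\in P$ forces $\scalar{wb}{q^\vee}\in\Z$ for all $q^\vee\in Q^\vee$, and applying \eqref{xinvariant} to $w\in W^\si\subset W_\si^\aff$ (for which $\psi(w)=w$) gives $g(w\x)=(\wt\si(w))^2 g(\x)=g(\x)$. By \eqref{crucWF}, together with the uniqueness \eqref{rfun2E} and the injectivity of $F_M^{\one,\si}\hookrightarrow\frac1M P^\vee/Q^\vee$ from \eqref{FD2}, the set $F_M^{\one,\si}/Q^\vee$ is a complete and irredundant family of $W^\si$-orbit representatives, and the orbit of $\x+Q^\vee$ has cardinality $\ep^\si(\x)$ by \eqref{epsilontilda}. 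Hence
\[
\scalar{\Psi_b^{\wt\si,\si}}{\Psi_{b'}^{\wt\si,\si}}_{F_M^{\wt\si,\si}}=\sum_{\x\in F_M^{\one,\si}}\ep^\si(\x)\,g(\x)=\sum_{x\in\frac1M P^\vee/Q^\vee} g(x).
\]

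Now I would expand $g$ over pairs $w,w'\in W^\si$ and interchange summations, so the inner sum becomes $\sum_{x}\e^{2\pi\im\scalar{wb-w'b'}{x}}$, which by \eqref{expDOG} equals $cM^n$ if $wb-w'b'\in MQ$ and $0$ otherwise. Since $Q$ (hence $MQ$) is $W$-invariant, $wb-w'b'\in MQ$ is equivalent to $(w'^{-1}w)\,b\equiv b'\pmod{MQ}$. If $b\ne b'$, this congruence has no solution $v=w'^{-1}w\in W^\si$: both $b/M$ and $b'/M$ lie in $F^\vee\cup r^\vee_\si F^{\vee\si}$ (as $F^{\vee\wt\si,\si}$ is contained in this set by \eqref{dualweightdomain}), so a relation $v(b/M)=b'/M+q$ with $q\in Q$ would be an identification inside the fundamental domain by the element $x\mapsto vx+q$ of $\wh W_\si^\aff=Q\rtimes W^\si$, forcing $b/M=b'/M$ by \eqref{fun2Ed} — a contradiction. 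Therefore the whole sum vanishes and \eqref{disk} holds with both sides zero.

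For $b=b'$ I would reindex by $v=w'^{-1}w\in W^\si$: for each fixed $v$ there are exactly $\abs{W^\si}$ pairs $(w,w')$ with $w'^{-1}w=v$ (namely $w'$ arbitrary and $w=w'v$), and $\wt\si(w)\wt\si(w')=(\wt\si(w'))^2\wt\si(v)=\wt\si(v)$. The congruence condition becomes $v\la=\la$ for $\la=b+MQ$, i.e. $v\in\Stab^{\vee\si}(\la)$, so the sum collapses to $cM^n\abs{W^\si}\sum_{v\in\Stab^{\vee\si}(\la)}\wt\si(v)$. The decisive step, and the one I expect to be the crux, is that this last sum has no sign cancellation: since $b/M\in F^{\vee\wt\si,\si}$, relation \eqref{rfunstab2} identifies $\Stab^{\vee\si}(\la)$ with $\Stab_{\wh W_\si^\aff}(b/M)$ via $\wh\psi$, and then \eqref{dualweightdomain} forces $\wt\si\circ\wh\psi$ to be trivial on that stabilizer, so $\wt\si(v)=1$ for every $v\in\Stab^{\vee\si}(\la)$. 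Consequently the sum equals $cM^n\abs{W^\si}\abs{\Stab^{\vee\si}(\la)}=cM^n\abs{W^\si}\,\wt h_M^{\vee\si}(\la)$, which is $c\abs{W^\si}M^n h_M^{\vee\si}(b)$ by \eqref{epsilontilda2}, as claimed. The main obstacle is essentially bookkeeping the two places where the passage to the \emph{reduced} domains $F_M^{\wt\si,\si}$ and $\Lambda_M^{\wt\si,\si}$ (rather than the full fundamental domains) is indispensable: the vanishing on $H^{\wt\si,\si}$ that licenses the extension to $F_M^{\one,\si}$, and the $\wt\si$-triviality of the dual stabilizer that prevents cancellation in the diagonal term.
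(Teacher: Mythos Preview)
Your proof is correct and follows essentially the same approach as the paper: extend the sum to $F_M^{\one,\si}$ via the vanishing \eqref{zerop}, unfold to the full torus $\frac1M P^\vee/Q^\vee$ using $W^\si$-invariance and \eqref{crucWF}, apply the exponential orthogonality \eqref{expDOG}, and then use \eqref{fun2Ed} for the off-diagonal vanishing and \eqref{dualweightdomain} for the triviality of $\wt\si$ on the dual stabilizer in the diagonal term. Your reindexing by $v=w'^{-1}w$ and detour through $\Stab^{\vee\si}(\la)$ via \eqref{rfunstab2} and \eqref{epsilontilda2} are minor presentational variants of the paper's direct use of $\Stab_{\wh W_\si^\aff}(b/M)$, but the substance is identical.
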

\begin{proof}
First, the common zero points of orbit functions from relation \eqref{zerop} together with the decompositions of the fundamental domains \eqref{Fsisi}, \eqref{disdec} yield
\begin{equation*}
\sum_{\x \in F_M^{\wt{\si},\si}} {\ep^{\si}(\x)\Psi_b^{\wt{\si},\si}(\x)\overline{\Psi_{b'}^{\wt{\si},\si}(\x)}} =\sum_{\x \in F_M^{\one,\si}} {\ep^{\si}(\x)\Psi_b^{\wt{\si},\si}(\x)\overline{\Psi_{b'}^{\wt{\si},\si}(\x)}}.
\end{equation*}
The expression ${\ep^{\si}(\x)\Psi_b^{\wt{\si},\si}(\x)\overline{\Psi_{b'}^{\wt{\si},\si}(\x)}}$ is in fact $W_{\si}^{\aff}-$invariant due to \eqref{epshift} and \eqref{xinvariant}. The implied invariance with respect to the shifts from $Q\dual$ together with \eqref{FD2}, \eqref{epsilontilda} give the relation
\begin{equation*}
\sum_{\x \in F_M^{\one,\si}} {\ep^{\si}(\x)\Psi_b^{\wt{\si},\si}(\x)\overline{\Psi_{b'}^{\wt{\si},\si}(\x)}} = \sum_{x \in F_M^{\one,\si}/Q\dual} {\wt{\ep}^{\si}(x)\Psi_b^{\wt{\si},\si}(x)\overline{\Psi_{b'}^{\wt{\si},\si}(x)}}.
\end{equation*}
The $W^{\si}-$invariance of  $\Psi_b^{\wt{\si},\si}(x)\overline{\Psi_{b'}^{\wt{\si},\si}(x)}$ and property \eqref{crucWF} allow changing the summation from $F_M^{\one,\si}/Q\dual$ to $ \frac{1}{M}P\dual/Q\dual$, i.e.     
\begin{equation*}
 \sum_{x \in F_M^{\one,\si}/Q\dual} {\wt{\ep}^{\si}(x)\Psi_b^{\wt{\si},\si}(x)\overline{\Psi_{b'}^{\wt{\si},\si}(x)}} =\sum_{x \in \frac{1}{M}P^{\vee}/Q^{\vee}} {\Psi_b^{\wt{\si},\si}(x)\overline{\Psi_{b'}^{\wt{\si},\si}(x)}}.
\end{equation*}
Next, formulas \eqref{genf} and \eqref{expDOG} are substituted,
\begin{align}
 \sum_{x \in \frac{1}{M}P^{\vee}/Q^{\vee}} {\Psi_b^{\wt{\si},\si}(x)\overline{\Psi_{b'}^{\wt{\si},\si}(x)}} &=\sum_{x \in \frac{1}{M}P^{\vee}/Q^{\vee}} \sum_{w' \in W^{\si}}\sum_{w \in W^{\si}}{\wt{\si}(w)\wt{\si}(w')\e^{2\pi\im\scalar{wb-w'b'}{x}}} \nonumber \\
 &=cM^n \sum_{w' \in W^{\si}} \sum_{w \in W^{\si}} \sum_{q \in Q}  {\wt{\si}(w)\wt{\si}(w') \delta_{wb,w'b'+Mq}} \nonumber \\
 &=cM^n \abs{W^{\si}}\sum_{w \in W^{\si}} \sum_{q \in Q}  {\wt{\si}(w) \delta_{b,wb'+Mq}}.  \label{deltaW}
 \end{align}
Consider the non-zero terms in summation \eqref{deltaW}.
The condition $b/M = w(b'/M)+q,$  due to \eqref{fun2Ed}, implies $w(b/M)+q=w^\aff (b/M)=b/M$ and $b/M=b'/M.$ Therefore, the summation set is reducible to $\Stab_{\wh{W}_{\si}^{\aff}}(b/M)$ and condition $b=b'$ holds, i.e.
$$cM^n \abs{W^{\si}}\sum_{w \in W^{\si}} \sum_{q \in Q}  {\wt{\si}(w) \delta_{b,wb'+Mq}} = 
cM^n\abs{W^{\si}}\delta_{b,b'}\sum_{w^{\aff} \in \Stab_{\wh{W}_{\si}^{\aff}}\left(\frac{b}{M}\right)} {\wt{\si}\circ \wh{\psi}(w^{\aff})}.$$
Finally, due to \eqref{FMc} it holds that $b/M \in F^{\vee\wt{\si},\si}$. Property \eqref{dualweightdomain} grants that $\wt{\si}\circ\wh{\psi}(w^{\aff})=1$ and the orthogonality relations are derived, 
 $$ cM^n\abs{W^{\si}}\delta_{b,b'}\sum_{w^{\aff} \in \Stab_{\wh{W}_{\si}^{\aff}}(b/M)} {\wt{\si}\circ \wh{\psi}(w^{\aff}) }=cM^n \abs{W^{\si}} h_M^{\vee\si}(b) \delta_{b,b'}.$$
\end{proof}
The discrete orthogonality relations of all types of functions $\Psi^{\wt{\si},\si}$ from Theorem \ref{DOGorbitfunctions} are also inherited by the related orbit functions with Hartley kernel $\Hart^{\wt{\si},\si}$. 
\begin{thm}\label{HartOG}
 For any $\si, \wt \si \in \Sigma$ and any $b,b'\in\Lambda_M^{\wt{\si},\si}$ it holds that
\begin{equation}\label{diskH}
\scalar{\Hart_{b}^{\wt{\si},\si}}{\Hart_{b'}^{\wt{\si},\si}}_{F_M^{\wt{\si},\si}}=c\abs{W^{\si}}M^n  h_M^{\vee\si}(b) \delta_{b,b'}.
\end{equation}
\end{thm}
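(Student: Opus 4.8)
The plan is to deduce Theorem~\ref{HartOG} from the complex orthogonality relations of Theorem~\ref{DOGorbitfunctions} by means of relation \eqref{HartS}. Writing $\Psi_b^{\wt\si,\si}=p+\im q$ and $\Psi_{b'}^{\wt\si,\si}=p'+\im q'$ with $p,q,p',q'$ real-valued, relation \eqref{HartS} gives $\Hart_b^{\wt\si,\si}=p+q$ and $\Hart_{b'}^{\wt\si,\si}=p'+q'$, and a one-line computation shows the pointwise identity
\begin{equation*}
\Hart_b^{\wt\si,\si}(\x)\,\Hart_{b'}^{\wt\si,\si}(\x)=\Rere\left(\Psi_b^{\wt\si,\si}(\x)\,\overline{\Psi_{b'}^{\wt\si,\si}(\x)}\right)+\Imim\left(\Psi_b^{\wt\si,\si}(\x)\,\Psi_{b'}^{\wt\si,\si}(\x)\right),
\end{equation*}
both sides being equal to $pp'+pq'+qp'+qq'$. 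Since the coefficients $\ep^\si(\x)$ are positive integers and $\Hart_{b'}^{\wt\si,\si}$ is real-valued, multiplying this identity by $\ep^\si(\x)$, summing over $\x\in F_M^{\wt\si,\si}$, and pulling $\Rere,\Imim$ outside the finite sums yields
\begin{equation*}
\scalar{\Hart_b^{\wt\si,\si}}{\Hart_{b'}^{\wt\si,\si}}_{F_M^{\wt\si,\si}}=\Rere\,\scalar{\Psi_b^{\wt\si,\si}}{\Psi_{b'}^{\wt\si,\si}}_{F_M^{\wt\si,\si}}+\Imim\left(\sum_{\x\in F_M^{\wt\si,\si}}\ep^\si(\x)\,\Psi_b^{\wt\si,\si}(\x)\,\Psi_{b'}^{\wt\si,\si}(\x)\right).
\end{equation*}

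The first summand on the right is immediate: Theorem~\ref{DOGorbitfunctions} evaluates $\scalar{\Psi_b^{\wt\si,\si}}{\Psi_{b'}^{\wt\si,\si}}_{F_M^{\wt\si,\si}}$ to the non-negative real number $c\abs{W^\si}M^n h_M^{\vee\si}(b)\delta_{b,b'}$, on which $\Rere$ acts trivially, and this is already the right-hand side of \eqref{diskH}. Hence it remains only to prove that the \emph{non-conjugated} sum $S:=\sum_{\x\in F_M^{\wt\si,\si}}\ep^\si(\x)\Psi_b^{\wt\si,\si}(\x)\Psi_{b'}^{\wt\si,\si}(\x)$ is a real number, for then $\Imim S=0$.

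To establish this, I would mirror, step for step, the chain of reductions from the proof of Theorem~\ref{DOGorbitfunctions}, with $\Psi_{b'}^{\wt\si,\si}$ in place of $\overline{\Psi_{b'}^{\wt\si,\si}}$. Relation \eqref{zerop} together with \eqref{Fsisi} and \eqref{disdec} extends the sum to $F_M^{\one,\si}$ without changing its value. The summand $\ep^\si(\x)\Psi_b^{\wt\si,\si}(\x)\Psi_{b'}^{\wt\si,\si}(\x)$ is $W_\si^{\aff}$-invariant, since $\ep^\si$ is invariant by \eqref{epshift} and, by \eqref{xinvariant}, each of the two orbit functions picks up the same sign $\wt\si\circ\psi(w^{\aff})\in\{\pm1\}$ under the action of $w^{\aff}\in W_\si^{\aff}$, so the two signs cancel. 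Then, exactly as in Theorem~\ref{DOGorbitfunctions}, $Q^\vee$-periodicity with \eqref{FD2} and \eqref{epsilontilda}, followed by the $W^\si$-invariance of $\Psi_b^{\wt\si,\si}\Psi_{b'}^{\wt\si,\si}$ on the torus together with \eqref{crucWF}, collapses $S$ to $\sum_{x\in\frac{1}{M}P^{\vee}/Q^{\vee}}\Psi_b^{\wt\si,\si}(x)\Psi_{b'}^{\wt\si,\si}(x)$. Substituting \eqref{genf} and the exponential orthogonality \eqref{expDOG} — applicable since $wb+w'b'\in P$ by $W$-invariance of $P$ — gives
\begin{equation*}
S=cM^n\sum_{w\in W^\si}\sum_{w'\in W^\si}\sum_{q\in Q}\wt\si(w)\,\wt\si(w')\,\delta_{wb+w'b',\,Mq},
\end{equation*}
a finite sum of real numbers; hence $S$ is real, $\Imim S=0$, and \eqref{diskH} follows.

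I do not expect a genuine obstacle here: once the pointwise identity is recorded, Theorem~\ref{HartOG} is essentially a corollary of Theorem~\ref{DOGorbitfunctions}. The one point that needs attention is that, unlike the Hermitian scalar product, the extra term $S$ must be treated on its own, and all that is required of it is that it reduce to a manifestly real exponential (Gauss-type) sum so that its imaginary part vanishes; whether $S$ itself vanishes for $b\neq b'$ plays no role.
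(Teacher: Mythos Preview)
Your proof is correct. Both you and the paper reduce the Hartley inner product to the complex one plus the imaginary part of the non-conjugated sum $S=\sum_{\x}\ep^\si(\x)\Psi_b^{\wt\si,\si}(\x)\Psi_{b'}^{\wt\si,\si}(\x)$; in the paper's notation this is exactly $\scalar{\Psi_b^{\wt\si,\si}}{\overline{\Psi_{b'}^{\wt\si,\si}}}_{F_M^{\wt\si,\si}}$, since conjugating the second argument twice recovers $\Psi_{b'}^{\wt\si,\si}$. Where the two arguments diverge is in establishing $\Imim S=0$. You re-run the torus reduction from the proof of Theorem~\ref{DOGorbitfunctions} on $S$ itself and observe that the resulting Kronecker-delta sum is manifestly real. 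The paper instead uses $\overline{\Psi_{b'}^{\wt\si,\si}}=\Psi_{-b'}^{\wt\si,\si}$, invokes Proposition~\ref{labelthm} to bring $-b'$ back to a label $b''$ in $MF^{\vee\one,\si}$, and then splits into two cases: either $b''\in MH^{\vee\wt\si,\si}$ so that $\Psi_{b''}^{\wt\si,\si}$ vanishes identically, or $b''\in\Lambda_M^{\wt\si,\si}$ so that Theorem~\ref{DOGorbitfunctions} already forces the scalar product to be real. Your route is more self-contained and sidesteps the case analysis; the paper's route recycles Theorem~\ref{DOGorbitfunctions} as a black box and exercises the label symmetry, tying the Hartley result more visibly to the structural machinery already in place.
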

\begin{proof}
An important consequence of \eqref{disk} is that for $b,b'\in \Lambda_M^{\wt{\si},\si}$ it holds that 
\begin{equation}\label{reality}
\Imim	\scalar{\Psi_{b}^{\wt{\si},\si}}{\Psi_{b'}^{\wt{\si},\si}}_{F_M^{\wt{\si},\si}} =0.
\end{equation}
The connection between the Hartley orbit functions and the regular orbit functions \eqref{HartS} and relation \eqref{reality} allow the following calculations,
\begin{align*}\scalar{\Hart_{b}^{\wt{\si},\si}}{\Hart_{b'}^{\wt{\si},\si}}_{F_M^{\wt{\si},\si}} =&
\scalar {\Rere \Psi_b^{\wt{\si},\si}+\Imim \Psi_b^{\wt{\si},\si}}{\Rere \Psi_{b'}^{\wt{\si},\si}+\Imim \Psi_{b'}^{\wt{\si},\si}}_{F_M^{\wt{\si},\si}}\\
 =& \frac{1}{4} \scalar {\left({\Psi_b^{\wt{\si},\si}+\overline{\Psi_b^{\wt{\si},\si}}}\right)-\im \left(\Psi_b^{\wt{\si},\si} - \overline{\Psi_b^{\wt{\si},\si}} \right) }{\left( {\Psi_{b'}^{\wt{\si},\si}+\overline{\Psi_{b'}^{\wt{\si},\si}}}\right)-\im\left({\Psi_{b'}^{\wt{\si},\si} - \overline{\Psi_{b'}^{\wt{\si},\si}}}\right)}_{F_M^{\wt{\si},\si}}. \\ =& \scalar{\Psi_{b}^{\wt{\si},\si}}{\Psi_{b'}^{\wt{\si},\si}}_{F_M^{\wt{\si},\si}} +\Imim \scalar{\Psi_{b}^{\wt{\si},\si}}{\overline{\Psi_{b'}^{\wt{\si},\si}}}_{F_M^{\wt{\si},\si}}.
\end{align*}
It remains to prove that
\begin{equation}\label{improve}
\Imim \scalar{\Psi_{b}^{\wt{\si},\si}}{\overline{\Psi_{b'}^{\wt{\si},\si}}}_{F_M^{\wt{\si},\si}} = 0.
\end{equation} 
First, from definition \eqref{genf} follows directly the relation for the complex conjugated function, 
\begin{equation}\label{complex}
\overline{\Psi_{b'}^{\wt{\si},\si}} = {\Psi_{-b'}^{\wt{\si},\si}}.
\end{equation} 
Property \eqref{fun1Ed} of the fundamental domain $F^{\vee\one,\si}$ of  $\wh{W}_{\si}^{\aff}$ implies that there exist a point $\wt{b} \in F^{\vee\one,\si}$ and $\wh{w}_{\si}^{\aff} \in \wh{W}_{\si}^{\aff} $ such that $-b'/M = \wh{w}_{\si}^{\aff}\wt{b}.$
Since $-b'/M  \in \frac{1}{M}P$ and $\frac{1}{M}P$ is $\wh{W}_{\si}^{\aff}-$invariant, one has that $\wt{b}\in\frac{1}{M}P $ and thus $b''= M\wt{b}\in P \cap MF^{\vee\one,\si}$. 
From the resulting relation
$$-b'= M\wh{w}_{\si}^{\aff}\left(\frac{b''}{M}\right), $$ 
equation \eqref{complex} and Proposition \ref{labelthm}, it follows that
\begin{equation}\label{complex2}\Imim \scalar{\Psi_{b}^{\wt{\si},\si}}{\overline{\Psi_{b'}^{\wt{\si},\si}}}_{F_M^{\wt{\si},\si}}=\Imim \scalar{\Psi_{b}^{\wt{\si},\si}}{{\Psi_{M\wh{w}_{\si}^{\aff}\left(\frac{b''}{M}\right)}^{\wt{\si},\si}}}_{F_M^{\wt{\si},\si}}= \wt{\si}\circ\wh{\psi}(\wh{w}_{\si}^{\aff})\cdot  \Imim \scalar{\Psi_{b}^{\wt{\si},\si}}{{\Psi_{b''}^{\wt{\si},\si}}}_{F_M^{\wt{\si},\si}}.\end{equation}
Secondly, if $b'' \in   MF^{\vee \one,\si}\setminus MF^{\vee \wt{\si},\si} = MH^{\vee\wt{\si},\si},$ 
then Proposition \ref{labelthm} states that the orbit function  $\Psi_{b''}^{\wt{\si},\si}$ vanishes on $F_M^{\wt{\si},\si}$ and  \eqref{improve} is thus valid.
If $b'' \in   MF^{\vee \wt{\si},\si}$, then indeed $b'' \in \Lambda_M^{\wt{\si},\si}$ and relation \eqref{reality} together with \eqref{complex2} imply that the scalar product in \eqref{improve} has no imaginary part.
\end{proof}

\subsection{Discrete orbit function transforms}\setcounter{prop}{0}\


An arbitrary function $f:\R^n \map \mathbb{C}$, sampled on the point set $F_M^{\wt{\si},\si}$, can be interpolated by the interpolating function $\Int [f]^{\wt{\si},\si}_M$. The interpolating function $\Int[f]^{\wt{\si},\si}_M$ is required to coincide with $f$ at all the gridpoints of $F_M^{\wt{\si},\si},$
\begin{equation}\label{trans0}
\Int[f]^{\wt{\si},\si}_M(\x)=f(\x) ,\q\x \in F_M^{\wt{\si},\si}.\end{equation}
The interpolating function $\Int [f]^{\wt{\si},\si}_M$ is given in terms of expansion functions $\Psi_b^{\wt{\si},\si},$

\begin{equation}\label{trans1}
\Int[f]_M^{\wt{\si},\si}(\x)= \sum_{b \in \Lambda_M^{\wt{\si},\si}}{k^{\wt{\si},\si}_b \Psi_b^{\wt{\si},\si}(\x)}, \q\x \in \R^n.	
\end{equation}
The frequency spectrum coefficients $k^{\wt{\si},\si}_b$ are, due to Theorem \ref{DOGorbitfunctions} and Theorem \ref{numelements}, uniquely determined by the standard method of calculation of Fourier coefficients
\begin{equation}\label{trans2}k^{\wt{\si},\si}_b =\frac{\scalar{f}{\Psi_{b}^{\wt{\si},\si}}_{F_M^{\wt{\si},\si}}}{\scalar{{\Psi_{b}^{\wt{\si},\si}}}{\Psi_{b}^{\wt{\si},\si}}_{F_M^{\wt{\si},\si}}} = \frac{1}{c\abs{W^{\si}}M^n{h^{\vee\si}_M(b)}}\sum_{\x \in F^{\wt{\si},\si}_M}{\epsilon^{\si}(\x)f(\x) \overline{\Psi_b^{\wt{\si},\si}(\x)    }}.\end{equation}
Taking into account equality \eqref{trans0}, relations \eqref{trans2} and \eqref{trans1} constitute the forward and backward discrete Fourier-Weyl transforms, respectively, of the discretized function $f$. Furthermore, using the Parseval equality of the orthogonal basis $\Psi^{\wt{\si},\si}_b,  b \in \Lambda_M^{\wt{\si},\si}$ results in the following relation
$$\sum_{\x \in F_M^{\wt{\si},\si}} {\epsilon^{\si}(\x) \abs{f(\x)}^{2} } = c \abs{W^{\si}} M^n\sum_{b \in \Lambda_M^{\wt{\si},\si}} {h_M^{\vee \si}(b) \abs{k^{\wt{\si},\si}_b}^2}.$$


Similarly to the interpolation formulas and discrete transforms of the standard orbit functions, their related real-valued versions are formulated in terms of Hartley orbit functions. An arbitrary real-valued function $g:\R^n \map \R$ sampled on the point set $F_M^{\wt{\si},\si}$ can be interpolated by the real-valued interpolating functions $\Inth[g]^{\wt{\si},\si}_M$. Again, the interpolating function $\Inth[g]^{\wt{\si},\si}_M$ coincides with $g$ at all the gridpoints $F_M^{\wt{\si},\si},$ 
$$\Inth[g]^{\wt{\si},\si}_M(\x)=g(\x), \q  \x \in F_M^{\wt{\si},\si},$$
and is given in terms of expansion functions $\Hart_b^{\wt{\si},\si},$
$$\Inth[g]^{\wt{\si},\si}_M(a)= \sum_{b \in \Lambda_M^{\wt{\si},\si}}{\hartcoefficient^{\wt{\si},\si}_b \Hart_b^{\wt{\si},\si}(\x)}, \q a\in 	\R^n .$$
The frequency spectrum coefficients $\hartcoefficient^{\wt{\si},\si}_b$ of the Hartley-Weyl transform are determined by
\begin{equation}\label{trans2H}
\hartcoefficient^{\wt{\si},\si}_b =\frac{\scalar{g}{\Hart_{b}^{\wt{\si},\si}}_{F_M^{\wt{\si},\si}}}{\scalar{{\Hart_{b}^{\wt{\si},\si}}}{\Hart_{b}^{\wt{\si},\si}}_{F_M^{\wt{\si},\si}}} = \frac{1}{c\abs{W^{\si}}M^n{h^{\vee\si}_M(b)}}\sum_{\x \in F^{\wt{\si},\si}_M}{ \epsilon^{\si}(\x) g(\x) {\Hart_b^{\wt{\si},\si}(\x)    }}
\end{equation}
and the relation between the sum of squared values of $g$ and the sum of squared values of its frequency spectrum is  
 $$\sum_{\x \in F_M^{\wt{\si},\si}} {\epsilon^{\si}(\x) g^2(\x) } = c \abs{W^{\si}}M^n \sum_{b \in \Lambda_M^{\wt{\si},\si}} {h_M^{\vee \si}(b) (\hartcoefficient^{\wt{\si},\si}_b)^2}.$$

\section{Concluding Remarks}
\begin{itemize}
\item General form of product-to-sum decomposition formulas exists for any two functions $\Psi_b^{\si_1,{\si}}$, $\Psi_b^{\si_2,{\si}}$which correspond to an identical underlying subgroup $W^{\si}$, 
\begin{equation}\label{decomp}
\Psi_b^{\si_1,{\si}}(\x)\cdot \Psi_{b}^{\si_2,{\si}}(\x') =\sum_{w \in W^{\si}} \si_2 (w)\cdot\Psi_{b}^{\si_1\cdot \si_2,{\si}}(\x+w\x').
\end{equation}
This formula with $\abs{W^{\si}}$ additive terms covers 5 cases of possible decompositions of products for algebras with one length of roots and 28 cases for algebras with two lengths of roots. Similar relations with $4\abs{W^{\si}}$ additive terms also hold for the Hartley orbit functions.   
\item Discrete orthogonality relations \eqref{disk} and decomposition formulas \eqref{decomp} are in \cite{Ef2d} exemplified for six types of two-variable $E-$functions of algebras $C_2$ and $G_2$. Effectiveness of interpolation formulas \eqref{trans1} of these two-variable $E-$functions is demonstrated on complex-valued model functions in \cite{Ef2di}.  Comparable interpolating ability of real-valued functions is expected for Hartley orbit functions. Good performance of orbit functions in interpolation tasks indicates great potential in other fields related to digital data processing. The interpolation properties of all types of orbit functions as well as existence of general convergence criteria of the operator sequence $\Int_M^{\wt{\si},\si}: f\mapsto \Int[f]_M^{\wt{\si},\si} $, similar to the estimates for the $A_n$ case achieved in \cite{xuAd}, deserve further study. 
\item Link between the Weyl-orbit functions and the inherited discrete and continuous orthogonality relations of the generalized multidimensional Chebyshev polynomials is recently investigated in connection with the corresponding polynomial methods such as polynomial interpolation, approximation and cubature formulas \cite{xuAd,SsSlcub}. Cubature rules related to (anti)symmetric, short and long orbit functions are developed in full generality in \cite{HMcub,SsSlcub,MP4}. Extension of these methods to the six types of $E-$functions requires a concept of generalized Laurent-type polynomials  and poses an open problem.
\item The discrete transforms \eqref{trans2} and \eqref{trans2H} of orbit functions specialize for the case $A_1$ to one-variable discrete Fourier, discrete Hartley, discrete cosine and sine transforms \cite{Brac, Poul}. Except for the Hartley transform, straightforward concatenating of one-dimensional transforms is commonly used to create their 2D and 3D versions \cite{Poul}. This approach results in multitude of possible combinations of transforms in higher dimensions and in the context of Weyl orbit functions of semisimple Lie algebras is elaborated in \cite{HKP}. Moreover,  two fundamentally different choices of even Weyl groups $W^{\si^e}$ are also brought forward \cite{HKP}. The first option is to create an even Weyl group of a semisimple Lie algebra as the direct product of Weyl groups of its simple components; the second option is to introduce a full even Weyl group of the semisimple Lie algebras and their related reducible root systems. This method opens numerous further prospects to produce different types of short and long even Weyl subgroups of semisimple Lie algebras and their linked discrete transforms. Even more additional alternatives are obtainable by including, in this method, the shifted lattice transforms from \cite{CzHr}.
\item Discrete orthogonality relations \eqref{disk} and \eqref{diskH} are formulated on the points of the refined dual weight lattice. This choice of the points induces in turn the dual affine Weyl group (anti)symmetry of the orbit function labels in Propositions \ref{labelthm} and \ref{labelthmH}. Discrete transforms of the (anti)symmetric, short and long orbit functions on the points from the refined weight lattice are in connection with the conformal field theory solved in \cite{HW2}. The labels of this discretization share the same (anti)symmetry with the points generated by the given affine Weyl group.  The Fourier transforms constructed on the points of the refined (dual) root lattice represent the remaining unresolved discrete transforms related to the four classical Weyl group invariant lattices. The merit of having all four classical lattice transforms available is the possibility of generating novel and relevant transforms on generalized lattices, including the 2D honeycomb lattice. The open problem of detailing the root lattice transforms is however, specifically challenging, since the symmetry groups of the labels are not in general Coxeter groups.

\end{itemize}

\section*{Acknowledgments}
JH and MJ are very grateful for the hospitality extended to them at the Centre de recherches math\'ematiques, Universit\'e de Montr\'eal and for valuable consultations with J. Patera. This work was supported by the Grant Agency of the Czech Technical University in Prague, grant number SGS16/239/OHK4/3T/14. JH gratefully acknowledges the support of this work by RVO14000.

\section*{Appendix}

Completing the counting formulas from \cite{discliegrI,discliegrII,discliegrE},
the number of elements of $F_M^{\wt{\si},\si}$ for Lie algebras with two lengths of roots are given by the following relations.
\begin{enumerate}\label{numberofelementsC2}\item  $C_n,\,n\geq 2$,
\begin{align*}
|F_{2k}^{\one,\si^s}(C_n)|&=  \begin{pmatrix}k+n \\ n \end{pmatrix}+\begin{pmatrix}k+n-1 \\ n \end{pmatrix}+\begin{pmatrix}k+1 \\ n \end{pmatrix}+\begin{pmatrix}k \\ n \end{pmatrix}\\
|F_{2k+1}^{\one,\si^s}(C_n)|&= 2\begin{pmatrix}k+n \\ n \end{pmatrix}+2\begin{pmatrix}k+1 \\ n \end{pmatrix} ,\\
|F_{2k}^{\one,\si^l}(C_n)|&= \begin{pmatrix}k+n \\ n \end{pmatrix}+2\begin{pmatrix}n+k-1 \\ n \end{pmatrix}+\begin{pmatrix}n+k-2 \\ n \end{pmatrix}\\
|F_{2k+1}^{\one,\si^l}(C_n)|&= 2\begin{pmatrix}k+n \\ n \end{pmatrix}+2\begin{pmatrix}n+k-1 \\ n \end{pmatrix},\\
|F_{2k}^{\si^e,\si^s}(C_n)|&= \begin{pmatrix}n+k-1 \\ n \end{pmatrix}+\begin{pmatrix}n+k-2 \\ n \end{pmatrix} + \begin{pmatrix}k \\ n \end{pmatrix}+\begin{pmatrix}k-1 \\ n \end{pmatrix}\\
|F_{2k+1}^{\si^e,\si^s}(C_n)|&= 2\begin{pmatrix}n+k-1 \\ n \end{pmatrix} + 2\begin{pmatrix}k \\ n \end{pmatrix},\\
|F_{2k}^{\si^e,\si^l}(C_n)|&= \begin{pmatrix}k+1 \\ n \end{pmatrix}+ 2\begin{pmatrix}k \\ n \end{pmatrix}+\begin{pmatrix}k-1 \\ n \end{pmatrix}\\
|F_{2k+1}^{\si^e,\si^l}(C_n)|&= 2\begin{pmatrix}k+1 \\ n \end{pmatrix} + 2\begin{pmatrix}k \\ n \end{pmatrix},\\
|F_{2k}^{\si^l,\si^e}(C_n)|&= \begin{pmatrix}k+1 \\ n \end{pmatrix}+\begin{pmatrix}k \\ n \end{pmatrix}+\begin{pmatrix}n+k-1 \\ n \end{pmatrix}+\begin{pmatrix}n+k-2 \\ n \end{pmatrix}\\
|F_{2k+1}^{\si^l,\si^e}(C_n)|&= 2\begin{pmatrix}n+k-1 \\ n \end{pmatrix} + 2\begin{pmatrix}k+1 \\ n \end{pmatrix},
\end{align*}
\item $B_n,\,n\geq 3$,

\begin{align*}
|F_M^{\one,\si^s}(B_n)|&=|F_M^{\one,\si^l}(C_n)|,\\
|F_M^{\one,\si^l}(B_n)|&=|F_M^{\one,\si^s}(C_n)|,\\
|F_M^{\si^e,\si^s}(B_n)|&=|F_M^{\si^e,\si^l}(C_n)|,\\
|F_M^{\si^e,\si^l}(B_n)|&=|F_M^{\si^e,\si^s}(C_n)|,\\
|F_M^{\si^l,\si^e}(B_n)|&=|F_M^{\si^l,\si^e}(C_n)|,
\end{align*} 
 \small{
\item $G_2,$
\begin{equation*}
\begin{alignedat}{4}
|F_{6k}^{\one,\si^s}(G_2)|&= 1+3\,k+6\,{k}^{2}, &\qquad |F_{6k+1}^{\one,\si^s}(G_2)|&=1+5\,k+6\,{k}^{2}\\
|F_{6k+2}^{\one,\si^s}(G_2)|&= 2+7\,k+6\,{k}^{2}, &\qquad |F_{6k+3}^{\one,\si^s}(G_2)|&=4+9\,k+6\,{k}^{2}\\
|F_{6k+4}^{\one,\si^s}(G_2)|&=5+11\,k+6\,{k}^{2},  &\qquad |F_{6k+5}^{\one,\si^s}(G_2)|&=7+13\,k+6\,{k}^{2}\\
|F_{6k}^{\one,\si^e}(G_2)|&= 2+6\,{k}^{2}, &\qquad |F_{6k+1}^{\one,\si^e}(G_2)|&=1+2\,k+6\,{k}^{2}\\
|F_{6k+2}^{\one,\si^e}(G_2)|&= 2+4\,k+6\,{k}^{2}, &\qquad |F_{6k+3}^{\one,\si^e}(G_2)|&=3+6\,k+6\,{k}^{2}\\
|F_{6k+4}^{\one,\si^e}(G_2)|&=4+8\,k+6\,{k}^{2},  &\qquad |F_{6k+5}^{\one,\si^e}(G_2)|&=5+10\,k+6\,{k}^{2}\\
\end{alignedat}
\end{equation*}
\begin{equation*}
\begin{alignedat}{4}
|F_{6k}^{\si^l,\si^s}(G_2)|&= 1-3\,k+6\,{k}^{2}, &\qquad |F_{6k+1}^{\si^l,\si^s}(G_2)|&=-k+6\,{k}^{2}\\
|F_{6k+2}^{\si^l,\si^s}(G_2)|&= k+6\,{k}^{2}, &\qquad |F_{6k+3}^{\si^l,\si^s}(G_2)|&=1+3\,k+6\,{k}^{2}\\
|F_{6k+4}^{\si^l,\si^s}(G_2)|&=1+5\,k+6\,{k}^{2},  &\qquad |F_{6k+5}^{\si^l,\si^s}(G_2)|&=2+7\,k+6\,{k}^{2}\\
|F_{6k}^{\si^l,\si^e}(G_2)|&= 6\,k^2, &\qquad |F_{6k+1}^{\si^l,\si^e}(G_2)|&=2\,k+6\,{k}^{2}\\
|F_{6k+2}^{\si^l,\si^e}(G_2)|&= 4\,k+6\,{k}^{2}, &\qquad |F_{6k+3}^{\si^l,\si^e}(G_2)|&=2+6\,k+6\,{k}^{2}\\
|F_{6k+4}^{\si^l,\si^e}(G_2)|&=2+8\,k+6\,{k}^{2},  &\qquad |F_{6k+5}^{\si^l,\si^e}(G_2)|&=4+10\,k+6\,{k}^{2},
\end{alignedat}
\end{equation*}
\begin{equation*}
\begin{alignedat}{1}
|F_M^{\one,\si^l}(G_2)|&=|F_M^{\one,\si^s}(G_2)|,\\
|F_M^{\si^s,\si^l}(G_2)|&=|F_M^{\si^l,\si^s}(G_2)|,
\end{alignedat}
\end{equation*}
\item $F_4,$
\begin{equation*}
\begin{alignedat}{4}
|F_{12k}^{\one,\si^s}(F_4)|&=1+8\,k+25\,{k}^{2}+36\,{k}^{3}+36\,{k}^{4}, &\qquad |F_{12k+1}^{\one,\si^s} (F_4)|&= 1+10\,k+31\,{k}^{2}+48\,{k}^{3}+36\,{k}^{4}\\
|F_{12k+2}^{\one,\si^s}(F_4)|&= 3+20\,k+49\,{k}^{2}+60\,{k}^{3}+36\,{k}^{4}, &\qquad |F_{12k+3}^{\one,\si^s}(F_4)|&= 4+25\,k+61\,{k}^{2}+72\,{k}^{3}+36\,{k}^{4}\\
|F_{12k+4}^{\one,\si^s}(F_4)|&=8+42\,k+85\,{k}^{2}+84\,{k}^{3}+36\,{k}^{4},  &\qquad |F_{12k+5}^{\one,\si^s}(F_4)|&= 10+52\,k+103\,{k}^{2}+96\,{k}^{3}+36\,{k}^{4}\\
|F_{12k+6}^{\one,\si^s}(F_4)|&=18+78\,k+133\,{k}^{2}+108\,{k}^{3}+36\,{k}^{4},  &\qquad |F_{12k+7}^{\one,\si^s}(F_4)|&= 22+95\,k+157\,{k}^{2}+78\,{k}^{3}+36\,{k}^{4}\\
|F_{12k+8}^{\one,\si^s}(F_4)|&= 35+132\,k+193\,{k}^{2}+132\,{k}^{3}+36\,{k}^{4},  &\qquad |F_{12k+9}^{\one,\si^s}(F_4)|&=  43+158\,k+223\,{k}^{2}+144\,{k}^{3}+36\,{k}^{4}\\
|F_{12k+10}^{\one,\si^s}(F_4)|&= 63+208\,k+265\,{k}^{2}+156\,{k}^{3}+36\,{k}^{4},  &\qquad |F_{12k+11}^{\one,\si^s}(F_4)|&=76+245\,k+301\,{k}^{2}+168\,{k}^{3}+36\,{k}^{4}\\
|F_{12k}^{\one,\si^e}(F_4)|&=2+52\,{k}^{2}+36\,{k}^{4}, &\qquad |F_{12k+1}^{\one,\si^e} (F_4)|&= 1+8\,k+49\,{k}^{2}+12\,{k}^{3}+36\,{k}^{4}\\
|F_{12k+2}^{\one,\si^e}(F_4)|&= 3+18\,k+58\,{k}^{2}+24\,{k}^{3}+36\,{k}^{4}, &\qquad |F_{12k+3}^{\one,\si^e}(F_4)|&= 4+26\,k+61\,{k}^{2}+36\,{k}^{3}+36\,{k}^{4}\\
|F_{12k+4}^{\one,\si^e}(F_4)|&=8+40\,k+76\,{k}^{2}+48\,{k}^{3}+36\,{k}^{4},  &\qquad |F_{12k+5}^{\one,\si^e}(F_4)|&= 10+50\,k+85\,{k}^{2}+60\,{k}^{3}+36\,{k}^{4}\\
|F_{12k+6}^{\one,\si^e}(F_4)|&=17+70\,k+106\,{k}^{2}+72\,{k}^{3}+36\,{k}^{4}  &\qquad |F_{12k+7}^{\one,\si^e}(F_4)|&= 21+84\,k+121\,{k}^{2}+84\,{k}^{3}+36\,{k}^{4}\\
|F_{12k+8}^{\one,\si^e}(F_4)|&= 32+112\,k+148\,{k}^{2}+96\,{k}^{3}+36\,{k}^{4},  &\qquad |F_{12k+9}^{\one,\si^e}(F_4)|&=  39+132\,k+169\,{k}^{2}+108\,{k}^{3}+36\,{k}^{4}\\
|F_{12k+10}^{\one,\si^e}(F_4)|&= 55+170\,k+202\,{k}^{2}+120\,{k}^{3}+36\,{k}^{4},  &\qquad |F_{12k+11}^{\one,\si^e}(F_4)|&=66+198\,k+229\,{k}^{2}+132\,{k}^{3}+36\,{k}^{4}\\
|F_{12k}^{\si^l,\si^s}(F_4)|&=1-8\,k+25\,{k}^{2}-36\,{k}^{3}+36\,{k}^{4}, &\qquad |F_{12k+1}^{\si^l,\si^s} (F_4)|&= -3\,k+13\,{k}^{2}-24\,{k}^{3}+36\,{k}^{4}\\
|F_{12k+2}^{\si^l,\si^s}(F_4)|&= -2\,k+13\,{k}^{2}-12\,{k}^{3}+36\,{k}^{4}, &\qquad |F_{12k+3}^{\si^l,\si^s}(F_4)|&= 7\,{k}^{2}+36\,{k}^{4}\\
|F_{12k+4}^{\si^l,\si^s}(F_4)|&=2\,k+13\,{k}^{2}+12\,{k}^{3}+36\,{k}^{4},  &\qquad |F_{12k+5}^{\si^l,\si^s}(F_4)|&= 3\,k+13\,{k}^{2}+24\,{k}^{3}+36\,{k}^{4}\\
|F_{12k+6}^{\si^l,\si^s}(F_4)|&=1+8\,k+25\,{k}^{2}+36\,{k}^{3}+36\,{k}^{4},  &\qquad |F_{12k+7}^{\si^l,\si^s}(F_4)|&= 1+10\,k+31\,{k}^{2}+48\,{k}^{3}+36\,{k}^{4}\\
|F_{12k+8}^{\si^l,\si^s}(F_4)|&= 3+20\,k+49\,{k}^{2}+60\,{k}^{3}+36\,{k}^{4},  &\qquad |F_{12k+9}^{\si^l,\si^s}(F_4)|&=  4+25\,k+61\,{k}^{2}+72\,{k}^{3}+36\,{k}^{4}\\
|F_{12k+10}^{\si^l,\si^s}(F_4)|&= 8+42\,k+85\,{k}^{2}+84\,{k}^{3}+36\,{k}^{4},  &\qquad |F_{12k+11}^{\si^l,\si^s}(F_4)|&=10+52\,k+103\,{k}^{2}+96\,{k}^{3}+36\,{k}^{4}\\
|F_{12k}^{\si^l,\si^e}(F_4)|&=-2\,{k}^{2}+36\,{k}^{4}, &\qquad |F_{12k+1}^{\si^l,\si^e} (F_4)|&= -k-5\,{k}^{2}+12\,{k}^{3}+36\,{k}^{4}\\
|F_{12k+2}^{\si^l,\si^e}(F_4)|&= 4\,{k}^{2}+24\,{k}^{3}+36\,{k}^{4}, &\qquad |F_{12k+3}^{\si^l,\si^e}(F_4)|&= -k+7\,{k}^{2}+36\,{k}^{3}+36\,{k}^{4}\\
|F_{12k+4}^{\si^l,\si^e}(F_4)|&=4\,k+22\,{k}^{2}+48\,{k}^{3}+36\,{k}^{4},  &\qquad |F_{12k+5}^{\si^l,\si^e}(F_4)|&= 5\,k+31\,{k}^{2}+60\,{k}^{3}+36\,{k}^{4}\\
|F_{12k+6}^{\si^l,\si^e}(F_4)|&=2+16\,k+52\,{k}^{2}+72\,{k}^{3}+36\,{k}^{4}  &\qquad |F_{12k+7}^{\si^l,\si^e}(F_4)|&= 2+21\,k+67\,{k}^{2}+84\,{k}^{3}+36\,{k}^{4}\\
\end{alignedat}
\end{equation*}
\begin{equation*}
\begin{alignedat}{4}
|F_{12k+8}^{\si^l,\si^e}(F_4)|&= 6+40\,k+94\,{k}^{2}+96\,{k}^{3}+36\,{k}^{4},  &\qquad |F_{12k+9}^{\si^l,\si^e}(F_4)|&=  8+51\,k+115\,{k}^{2}+108\,{k}^{3}+36\,{k}^{4}\\
|F_{12k+10}^{\si^l,\si^e}(F_4)|&= 16+80\,k+148\,{k}^{2}+120\,{k}^{3}+36\,{k}^{4},  &\qquad |F_{12k+11}^{\si^l,\si^e}(F_4)|&=20+99\,k+175\,{k}^{2}+132\,{k}^{3}+36\,{k}^{4},\\
\end{alignedat}
\end{equation*}
\begin{equation*}
\begin{alignedat}{1}
|F_M^{\one,\si^l}(F_4)|&=|F_M^{\one,\si^s}(F_4)|,\\
|F_M^{\si^s,\si^l}(F_4)|&=|F_M^{\si^l,\si^s}(F_4)|.
\end{alignedat}
\end{equation*}}
\end{enumerate}


\begin{thebibliography}{99}

\bibitem{Bour}
N.~Bourbaki, \textit{Groupes et alg\`ebres de Lie: Chapters IV, V, VI}, (1981) Masson, Paris.

\bibitem{BB}
A. Bjorner, F. Brenti, \textit{Combinatorics of Coxeter groups}, Graduate Texts in Mathematics, {\bf 231} (2005) Springer, New York.

\bibitem{Brac}
R. N. Bracewell, \textit{Discrete Hartley transform}, J. Opt. Soc. Am. {\bf 73} (1983) 1832-1835. 

\bibitem{Bhar}
S. D. Bharkad, M. Kokare, \textit{Hartley transform based fingerprint
matching}, J. Inf. Process. Syst., {\bf 8} (2012) 85--100.

\bibitem{CsTi}
J. Cserti, G. Tichy, \textit{A simple model for the vibrational
modes in honeycomb lattices}, Eur. J. Phys. {\bf 25} (2004) 723--736.

\bibitem{CzHr}
T. Czy\.zycki, J. Hrivn\'ak, \textit{Generalized discrete orbit function transforms of affine Weyl groups}, J. Math. Phys. {\bf 55} (2014) 113508.

\bibitem{DrSa}
L. B. Drissi, E. H. Saidi, M. Bousmina, \textit{Graphene, Lattice Field Theory and Symmetries}, J.~Math.~Phys. {\bf 52} (2011) 022306.

\bibitem{Ef2d}
L. H\'akov\'a, J. Hrivn\'ak, J. Patera, \textit{Six types of $E-$functions of Lie group $O(5)$ and $G(2)$}
J. Phys. A: Math. Theor.~ {\bf 43} (2010) 165206.

\bibitem{Ef2di}
L. H\'akov\'a, J. Hrivn\'ak, \textit{Fourier Transforms of $E-$functions of $O(5)$ and $G(2)$}, in: Geometric Methods in Physics, XXXII workshop, Bia\l owie\.za, Poland, June 30--July 6, 2013, BirkhĂ¤user/Springer (2014) 243--252.

\bibitem{HKP}
J. Hrivn\'ak, I. Kashuba J. Patera, {\it On $E-$functions of semisimple Lie groups}, J. Phys. A: Math. Theor.~{\bf 44} (2011) 325205.

\bibitem{discliegrII}
J. Hrivn\'ak, L. Motlochov\'a, J. Patera, \textit{On discretization of tori of compact simple Lie groups II,}
J. Phys. A: Math. Theor. {\bf 45} (2012) 255201.

\bibitem{HMcub}
J. Hrivn{\'a}k, L. Motlochov{\'a}, J. Patera, {\it Cubature formulas of multivariate polynomials arising from symmetric orbit functions}, Symmetry {\bf 8} (2016) 63.

\bibitem{discliegrI}
J. Hrivn\'ak, J. Patera, \textit{On discretization of tori of compact simple Lie groups,}
J. Phys. A: Math. Theor.~{\bf 42} (2009) 385208.

\bibitem{discliegrE}
J. Hrivn\'ak, J. Patera, \textit{On $E-$discretization of tori of compact simple Lie groups}
J. Phys. A: Math. Theor. {\bf 43} (2010) 165206.

\bibitem{HW1}
J. Hrivn\'ak, M. A. Walton, \textit{Discretized Weyl-orbit functions: modified
multiplication and Galois symmetry,}  J. Phys. A: Math. Theor.~{\bf 48} (2015) 175205. 

\bibitem{HW2}
J. Hrivn\'ak, M. A. Walton, \textit{Weight-Lattice Discretization of Weyl-Orbit Functions,} J. Math. Phys. {\bf 57}, (2016) 083512. 

\bibitem{H2}
J.~E.~Humphreys, \textit{Reflection groups and Coxeter groups}, Cambridge Studies in Advanced Mathematics, {\bf 29} (1990) Cambridge University Press, Cambridge.

\bibitem{xuAd}
H. Li, Y. Xu, \textit{Discrete Fourier analysis on fundamental domain and simplex of $A_d$ lattice in $d$-variables}, J. Fourier Anal. Appl. {\bf 16} (2010) 383--433.

\bibitem{KaHa}
T. Kariyado, Y. Hatsugai, \textit{Manipulation of Dirac Cones in Mechanical Graphene}, Sci. Rep. {\bf 5} (2015) 18107.

\bibitem{KP1}
A. U. Klimyk, J. Patera, \textit{Orbit functions,}  SIGMA {\bf 2} (2006) 006.

\bibitem{KP2}
A. U. Klimyk, J. Patera, \textit{Antisymmetric orbit functions,}  SIGMA {\bf 3} (2007) 023.

\bibitem{KP3}
A. U. Klimyk, J. Patera, \textit{$E-$orbit functions,} SIGMA {\bf 4} (2008) 002.

\bibitem{Kuhl}
H. K\"uhl, M. D. Sacchi, J. Fertig, \textit{The Hartley transform in seismic
imaging}, Geophysics, {\bf 66} (2001) 1251--1257.

\bibitem{Liu}
S. Liu, C. Guo J. T. Sheridan, \textit{A review of optical image encryption
techniques}, Opt. Laser Technol., {\bf 57} (2014) 327--342.

\bibitem{MP2}
R. V. Moody, J.~Patera, \textit{Orthogonality within the families of \ $C-$,
$S-$, and $E-$functions of any compact semisimple Lie group,} SIGMA {\bf 2} (2006) 076.

\bibitem{SsSlcub}
R. V. Moody, L. Motlochov\'a, J. Patera, \textit{Gaussian cubature arising from hybrid characters of simple Lie groups}. { J. Fourier Anal. Appl.}, {\bf 20} (2014) 1257--1290.

\bibitem{MP4}
R.~V.~Moody, J.~Patera, {\it Cubature formulae for orthogonal polynomials in terms of elements of finite order of compact simple Lie groups,\/} Advances in Applied Mathematics
{\bf 47} (2011) 509---535.

\bibitem{Paras}
I. Paraskevas, M. Barbarosou, E. Chilton, \textit{Hartley transform and the use of the Whitened Hartley spectrum as a tool for phase spectral processing}, 
J. Eng. (2015).  

\bibitem{Pusch}
M. P\"uschel, J. M. F. Moura, \textit{Algebraic signal processing theory:
Foundation and 1-D time,} IEEE Trans. Signal Process., {\bf 56} (2008) 3572--3585.

\bibitem{Poul}
A. D. Poularikas, \textit{The Transforms and Applications Handbook}, (2010) CRC Press, Boca Raton.

\bibitem{Sun}
H. Sun, \textit{The Hartley transform applied to particle image velocimetry}, Meas. Sci. Technol., {\bf 13} (2002) 1996--2000. 

\end{thebibliography}
\end{document}